\newtheorem{theorem}{Theorem}
\newtheorem{lemma}{Lemma}
\newtheorem{corollary}{Corollary}
\newtheorem{definition}{Definition}
\newtheorem{example}{Example}
\newtheorem{remark}{Remark}
\newtheorem*{lemma*}{Lemma}
\newtheorem*{theorem*}{Theorem}
\newcommand {\hadamard} {{\textsf{H}}}
\newcommand {\Z} {{\textsf{Z}}}
\newcommand {\T}{{\textsf{T}}}
\newcommand {\hadadmard}{{\textsf{H}}}
\newcommand {\cx} {{\textsf{CX}}}
\newcommand{\bphi}{{\Phi}}
\newcommand {\TDD} {{\mathit{TDD}}}
\newcommand {\cnot} {{\textsf{CX}}}
\newcommand {\F} {{\mathcal{F}}}
\newcommand {\M} {{\mathcal{M}}}
\newcommand {\g} {{\mathcal{G}}}
\newcommand {\cX}[1] {{\mathcal{#1}}}
\newcommand {\cont} {{\textsf{cont}}}
\newcommand{\blue}[1]{{\color{black}{#1}}}
\newcommand{\xblue}[1]{{\color{black}{#1}}}
\newcommand{\rmagenta}[1]{{\color{black}{#1}}}
\newcommand{\rblue}[1]{{\color{black}{#1}}}
\newcommand{\rred}[1]{{\color{black}{#1}}}
\newcommand{\rzxz}[1]{\color{black}}
\begin{document}

\title{A Tensor Network based Decision Diagram for Representation of Quantum Circuits}


\author{Xin Hong}
\affiliation{%
  \institution{University of Technology Sydney}
  \country{AUS}
}

\author{Xiangzhen Zhou}
\affiliation{%
  \institution{University of Technology Sydney}
  \country{AUS} 
  \institution{Southeast University}
  \country{CN}
}

\author{Sanjiang Li}
\affiliation{%
  \institution{University of Technology Sydney}
  \country{AUS}
}
\email{sanjiang.li@uts.edu.au}

\author{Yuan Feng}
\affiliation{%
  \institution{University of Technology Sydney}
  \country{AUS}
}
\email{yuan.feng@uts.edu.au}

\author{Mingsheng Ying}
\affiliation{%
    \institution{Institute of Software, Chinese Academy of Sciences}
  \country{CN}
    \institution{Tsinghua University}
  \country{CN}
}
\email{yingms@ios.ac.cn} \email {yingmsh@tsinghua.edu.cn}

\begin{abstract}
Tensor networks have been successfully applied in simulation of quantum physical systems for decades. Recently, they have also been employed in classical simulation of quantum computing, in particular, random quantum circuits. This paper proposes a decision diagram style data structure, called TDD (Tensor Decision Diagram), for more
principled and convenient applications of tensor networks. This new data structure provides a compact and canonical representation for quantum circuits. By exploiting circuit partition, the TDD of a quantum circuit can be computed efficiently. Furthermore, we show that the operations of tensor networks essential in their applications (e.g., addition and contraction) can also be implemented efficiently in TDDs. A proof-of-concept implementation of TDDs is presented and its efficiency is evaluated on a set of benchmark quantum circuits. It is expected that TDDs will play an important role in various design automation tasks related to quantum circuits, including but not limited to equivalence checking, error detection, synthesis, simulation, and verification.
\end{abstract}

\setcopyright{acmcopyright}
\acmJournal{TODAES}
\acmYear{2022} \acmVolume{1} \acmNumber{1} \acmArticle{1} \acmMonth{1} \acmPrice{15.00}\acmDOI{10.1145/3514355}

\begin{CCSXML}
<ccs2012>
   <concept>
       <concept_id>10010583.10010786.10010813.10011726</concept_id>
       <concept_desc>Hardware~Quantum computation</concept_desc>
       <concept_significance>500</concept_significance>
       </concept>
   <concept>
       <concept_id>10010147.10010341.10010366.10010369</concept_id>
       <concept_desc>Computing methodologies~Simulation tools</concept_desc>
       <concept_significance>500</concept_significance>
       </concept>
 </ccs2012>
\end{CCSXML}

\ccsdesc[500]{Hardware~Quantum computation}
\ccsdesc[500]{Computing methodologies~Simulation tools}

\keywords{tensor network, decision diagram}

\maketitle
\section{Introduction}
Google's recent demonstration of quantum supremacy on its 53-qubit quantum processor Sycamore \cite{GoogleQsupr} has confirmed that quantum computers can indeed complete tasks much more efficiently than the most advanced traditional computers. Quantum devices of
similar sizes have also been developed at IBM, Intel, IonQ, and Honeywell. It is widely believed that quantum processors with several hundreds of qubits will very likely appear in the next 5-10 years. 
The rapid growth of the size of quantum computing hardware motivates people to develop effective techniques for synthesis, optimisation, testing and verification of quantum circuits.  

Mathematically, quantum circuits  can be represented as unitary matrices, which transform initial quantum states (represented as vectors) to desired output states.  The size of this matrix representation grows exponentially with the size of the quantum system, which makes it a great challenge to even simulate a quantum random circuit with a modest size and a shallow depth. Existing matrix-based packages like Qiskit (https://qiskit.org/) and the Google TensorNetwork \cite{roberts2019tensornetwork}, though very efficient, store such a matrix as a complete array, whose size quickly exceeds the memory limit. For example, it requires 64GB memory to store the functionality of a 16-qubit quantum circuit if each matrix entry is represented in data type \emph{complex128}. 

In order to alleviate the challenge and to provide a compact, canonical, and efficient representation for quantum functionalities, several decision diagram style data structures have been proposed, including Quantum Information Decision Diagrams (QuIDDs) \cite{viamontes2003improving}, Quantum Multiple-Valued Decision Diagrams (QMDDs) \cite{niemann2015qmdds}, \xblue{and Decision Diagrams for Matrix Functions (DDMFs) \cite{yamashita2008ddmf}}. QuIDD is a variant of Algebraic Decision Diagrams (ADDs) \cite{bahar1997algebric} by restricting values to complex numbers, which are indexed by integers, and interleaving row and column variables in the variable ordering. In contrast, QMDD partitions a transformation matrix into four sub-matrices of equal size, which in turn are partitioned similarly, and uses shared nodes to represent sub-matrices differing in only a constant coefficient. 
Evaluations in  \cite{niemann2015qmdds} showed that QMDDs offer a compact representation for large unitary (transformation) matrices. Consequently, they provide a compact and canonical representation for the functionality of quantum circuits. Indeed, QMDDs have been successfully used in  simulation \cite{zulehner2018advanced} and equivalence checking \cite{burgholzer2020improved,burgholzer2020advanced} of quantum circuits as well as verifying the correctness of quantum circuits compilation \cite{wille2020efficient}. \xblue{DDMFs provide an efficient verification scheme for the class of so-called semi-classical quantum circuits \cite{yamashita2008ddmf}.
Very recently, two new decision diagram methods \cite{tsai2020bit,vinkhuijzen2021limdd} have been proposed, both demonstrating superiority than QMDD in some simulation tasks. Note that these methods cannot be directly used for representing the unitary matrix of a quantum circuit. 
}

Tensor networks provide a flexible way to represent quantum circuits and have been successfully employed in the classical simulation of quantum computing in the last few years. By observing that quantum circuits are a special class of tensor networks, Pednault et al. \cite{pednault2017breaking} exploited the flexibility of tensor computations with circuit partition and tensor slicing methods, and broke the 49-qubit barrier of that time in the simulation of quantum circuits. Later on, the size and depth of quantum circuits which can be simulated employing tensor network and the simulation time have been significantly improved (see, e.g., \cite{boixo2017simulation, li2019quantum, chen2018classical, chen201864, huang2020classical}). Tensor networks can also be applied in computing the functionality of a quantum circuit. Indeed, it can be computed in essentially any order, which in turn greatly affects the calculation efficiency. For a quantum circuit with low tree-width, by exploiting an optimal contraction order, the tensor representation of the quantum circuit can be computed in time polynomial in the size of the circuit \cite{markov2008simulating}. While it is in general NP-hard to find an optimal contraction order, one may exploit heuristics like circuit partition \cite{pednault2017breaking}, tree decomposition \cite{markov2008simulating}, and hyper-optimisation approaches \cite{gray2020hyper}, which have been demonstrated as very useful for simulating quantum circuits.

Inspired by the success of tensor networks in the classical simulation of quantum circuits, this paper aims to introduce a novel decision diagram, called Tensor Decision Diagram (TDD for short), for tensor networks. As a new data structure, TDD can further 
explore the flexibility of tensor networks in a more principled way, while overcoming the serious memory bottleneck of matrix-based representations.




While it is observed that the Boole-Shannon expansion commonly used in the design of decision diagrams is ``not a basic decomposition for quantum mechanical phenomena \cite{niemann2015qmdds}'',  tensors, as multidimensional linear maps with complex values, do enjoy Boole-Shannon style expansions. This observation lays the foundation of our design of TDD. 

 
TDDs have several important features that warrant their applicability. Analogous to reduced ordered binary decision diagrams (ROBDD) for Boolean functions \cite{bryant1986graph}, redundant nodes or nodes representing the same tensor in a TDD can be removed or merged so that shared nodes are used as much as possible. The canonicity result  (Theorem~\ref{thm:canonicity}) guarantees that, up to variable ordering, each quantum circuit has a unique reduced TDD representation. 
An efficient algorithm (Alg.~\ref{TDD_generate}) is also designed to generate the reduced TDD representation of a quantum functionality (e.g., a quantum gate or a part of a quantum circuit). Moreover, we show that basic TDD operations such as addition and contraction can be implemented efficiently.  As QMDD, TDD provides a universal, compact and canonical representation for quantum circuits, which is vital in various design automation tasks. \xblue{Indeed, as we shall see in Section~\ref{sec:qmdd_vs._tdd}, TDDs can be regarded as QMDDs under flexible variable order relaxation, which, when combined with tensor network techniques, may potentially provide greater flexibility in computing the final decision diagram representations of  quantum circuits. }


In the remainder of this paper, after a brief review of quantum circuits and QMDD in Sec.~\ref{sec:background} and of tensor networks in Sec.~\ref{sec:tensor}, we introduce our new data structure TDD \xblue{and compare it with QMDD} in Sec.~\ref{sec:tdd}. The construction and implementation of basic tensor operations are presented in Sec.~\ref{sec:construction}. After that, we show how to compute the TDD representation of a quantum circuit in a circuit partition way in Sec. \ref{sec:circuitpartition}. Experimental results are \rmagenta{reported} and analysed in Sec.~\ref{sec:experiments}. The last section concludes the paper and briefly discusses several topics for future research. Most technical proofs 
are presented in the appendices. 

\section{Background} \label{sec:background}
For convenience of the reader, we review some basic concepts about quantum circuits and the Quantum Multi-value Decision Diagram (QMDD) in this section.

\subsection{Quantum Circuits}
The most basic concept in quantum computing is the qubit, which is the counterpart of bit in classical computing. The state of a qubit is often represented in  Dirac notation
\begin{equation}
\label{eq:qubit}
  \ket{ \varphi} :=  \alpha_0 \ket{0} + \alpha_1 \ket{1},
\end{equation}
where $\alpha_0$ and $\alpha_1$ are complex numbers, called the amplitudes of $\ket{\varphi}$, and satisfy ${\left| \alpha_0  \right|^2} + {\left| \alpha_1  \right|^2} = 1$. We also use the vector $[\alpha_0,\alpha_1]^\intercal$ to represent a single-qubit state. In general, an $n$-qubit quantum state can be represented as a $2^n$-dimensional complex vector  $[\alpha_0,\alpha_1,\dots,\alpha_{2^n-1}]^\intercal$.


\begin{figure}
\centerline{$\mathit{H\ gate}: \ \ 
\begin{tikzcd}[column sep=0.4cm,row sep=0.4cm]
&\qw  & \gate{H}  &\qw &\qw \\
\end{tikzcd}\ \ \ \ \ \ \ 
\frac{1}{\sqrt{2}}\left[\begin{array}{cccc} 
		1 & 1\\ 
		1 & -1\\
\end{array}\right]
$}
\centerline{
$\mathit{T\ gate}: \ \ 
\begin{tikzcd}[column sep=0.4cm,row sep=0.4cm]
&\qw  & \gate{T}  &\qw &\qw \\
\end{tikzcd}\ \ \ \ \ \ \ \ \ \ \ 
\left[\begin{array}{cccc} 
		1 & 0\\ 
		0 & e^{\frac{i\pi}{4}}\\
\end{array}\right]
$}
\centerline{
$\mathit{CX\ gate}: 
\begin{tikzcd}[column sep=0.4cm,row sep=0.4cm]
&\qw  & \ctrl{1}  &\qw &\qw \\
&\qw  & \gate{X}  &\qw &\qw \\
\end{tikzcd}\ \ \ \ \ 
\left[\begin{array}{cccc} 
		1 & 0 &0 &0\\ 
		0 & 1 &0 &0\\
		0 & 0 &0 &1\\
		0 & 0 &1 &0\\
\end{array}\right]
$}
\caption{The matrix representations of the \hadadmard, \T, and \cnot gate.}
\label{fig:gates}
\end{figure}

The evolution of a quantum system is described by a unitary transformation. In quantum computing, it is usually called a quantum gate. A quantum gate has a unique unitary matrix representation in a predefined orthonormal basis. Fig.~\ref{fig:gates} shows several such examples. 
The state after applying a specific transformation can be obtained by multiplying the corresponding unitary matrix and the vector that represents the input quantum state. For example, the output state resulted from applying a Hadamard gate to an input state $\left[{\alpha _0}, {\alpha _1}\right]^\intercal$
is calculated as follows
\[\frac{1}{{\sqrt 2 }}\left[ {\begin{array}{*{20}{r}}
1&1\\
1&{ - 1}
\end{array}} \right]\left[ {\begin{array}{*{20}{c}}
{{\alpha _0}}\\
{{\alpha _1}}
\end{array}} \right] = \frac{1}{{\sqrt 2 }}\left[ {\begin{array}{*{20}{c}}
{{\alpha _0} + {\alpha _1}}\\
{{\alpha _0} - {\alpha _1}}
\end{array}} \right].\]
More generally, an $n$-qubit quantum gate is represented as a $2^n\times 2^n$-dimensional unitary transformation matrix.

A quantum circuit consists of a set of qubits and a sequence of elementary quantum gates. Given an input state to the qubits involved, the quantum gates in a quantum circuit will be applied to the input state in a sequential manner. The functionality of an $n$-qubit quantum circuit can also be described by a $2^n\times 2^n$-dimensional unitary transformation matrix.

\subsection{Quantum Multi-value Decision Diagram}\label{sec:qmdd}

Quantum Multi-value Decision Diagram (QMDD) \cite{miller2006qmdd} is a decision diagram based data structure which provides a compact and canonical representation for quantum states and transformation matrices.  

The main idea of QMDD is to recursively partition a $2^n\times 2^n$ transformation matrix $A$ into sub-matrices till matrix elements are reached. The QMDD of $A$ is constructed as follows: First, we introduce a root node, representing the original matrix. The root node has four successors, denoting the  sub-matrices obtained by  partitioning $A$ into four with the same size. Each child node is then further expanded in the same manner. Suppose, in some step, a node corresponding to a matrix element is obtained. Then this node is regarded as 
a terminal node labelled 1, while its corresponding matrix element will be assigned as the weight of its incoming edge. The obtained decision diagram may have redundant nodes and weight-0 edges. After proper normalisation and reduction, we have a reduced decision diagram representation of $A$, which is unique up to the order of variables.

\begin{figure}
\centering
\includegraphics[width=0.2\textwidth]{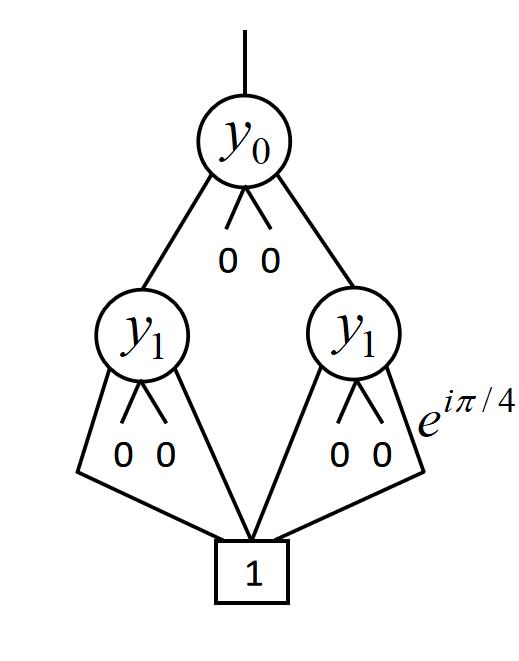}
\caption{The QMDD representation of the controlled-T gate, where the weight of an edge is omitted if it is 1.}
\label{fig:qmdd}
\end{figure}

\begin{example}
Shown in Fig.~\ref{fig:qmdd} is the QMDD representation of the \rblue{controlled-T} gate, where the node labelled with $y_0$ represents the original matrix representation of the \rblue{controlled-T} gate and the two 0 attached to it represent the upper right and bottom left sub-matrices which are all 0-matrices. The two nodes labelled with $y_1$ represent the upper left and bottom right sub-matrices which are,  respectively, the identity matrix and the matrix of the \rblue{T} gate.
\end{example}

\section{Tensor and Tensor Network} \label{sec:tensor}

Before describing our data structure TDD, let us briefly recall the basic idea and notations of tensor networks. 
\subsection{Basic concepts}

A \emph{tensor} is a multidimensional linear map associated with a set of indices. In this paper, we assume that each index takes value in $\{0,1\}$. That is, a tensor with index set $I=\{x_1, \ldots, x_n\}$ is simply a mapping $\phi: \{0,1\}^I \to \mathbb{C}$, where $\mathbb{C}$ is the field of complex numbers. Sometimes, to emphasise the index set, we denote such a tensor by $\phi_{x_1\ldots x_n}$ or $\phi_{\vec{x}}$, and its value on the evaluation $\{x_i \mapsto  a_i, 1\leq i\leq n\}$ by
$\phi_{x_1\ldots x_n}(a_1, \ldots, a_n)$, or simply $\phi_{\vec{x}}(\vec{a})$ or even $\phi(\vec{a})$ when there is no confusion. The number $n$ of the indices of a tensor is called its \emph{rank}. 
Scalars, \rmagenta{2-dimensional} vectors, and \rmagenta{$2\times 2$} matrices are rank 0, rank 1, and rank 2 tensors, respectively. 

 
The most important operation between tensors is \emph{contraction}. The contraction of two tensors is a tensor obtained by summing up over shared indices. Specifically, let $\gamma_{\vec{x}, \vec{z}}$ and $\xi_{\vec{y}, \vec{z}}$ be two tensors which share a common index set $\vec{z}$. Then their contraction is a new tensor $\phi_{\vec{x}, \vec{y}}$ with 
\begin{equation}\label{eq:contdef}
	\phi_{\vec{x}, \vec{y}}(\vec{a}, \vec{b})=\sum_{\vec{c}\in \{0,1\}^{\vec{z}}}{\gamma_{\vec{x}, \vec{z}}(\vec{a}, \vec{c})\cdot \xi_{\vec{y}, \vec{z}}}(\vec{b}, \vec{c}).
\end{equation}
Another useful tensor operation is \emph{slicing}, which corresponds to the cofactor operation of Boolean functions. Let $\phi$ be a tensor with index set $I=\{x, x_1,\ldots,x_n\}$. The slicing of $\phi$ with respect to $x = c$ with $c\in \{0,1\}$ is a tensor $\phi|_{x=c}$ over $I'=\{x_1, \ldots, x_n\}$ given by 
\begin{align}
\phi|_{x=c}(\vec{a}):= \phi(c, \vec{a})
\end{align}
for any $ \vec{a} \in \{0,1\}^n$. We call $\phi|_{x=0}$ and $\phi|_{x=1}$ the negative and positive slicing of $\phi$ with respect to $x$, respectively. 
We say an index $x\in I$ is \emph{essential} for $\phi$ if $\phi|_{x=0}\not=\phi|_{x=1}$.

A \emph{tensor network} is an undirected graph $G=(V, E)$ \rmagenta{with zero or multiple open edges}, where each vertex $v$ in $V$ represents a tensor and each edge a common index associated with the two adjacent tensors. \rmagenta{By contracting connected tensors (i.e., vertices in $V$), with an arbitrary order, we get a rank $m$ tensor, where $m$ is the number of open edges of $G$. This tensor, which is independent of the contraction order, is also called the tensor representation of the tensor network. Interested readers are referred to \cite{markov2008simulating} and \cite{biamonte2019lectures} for more detailed introduction.} 






\subsection{Quantum circuits as tensor networks}
The quantum state of a qubit $x$ with vector representation  $[\alpha_0,\alpha_1]^\intercal$ can be described as a rank 1 tensor $\phi_{x}$, where $\phi_x(0)=\alpha_0$ and $\phi_x(1)=\alpha_1$. Moreover, a single-qubit gate with input qubit $x$ and output qubit $y$ can be represented as a rank 2 tensor $\phi_{xy}$. Note that for tensor representation, we do not distinguish \rmagenta{between} input and output indices, \rmagenta{information about which can be naturally implied} when tensors are interpreted as gates or circuits. For example, the tensor representation of a \Z-gate, with $x$ the input and $y$ the output qubit, is $\phi_{xy}(00)= 1$, $\phi_{xy}(01)=\phi_{xy}(10)=0$, $\phi_{xy}(11)=-1$. Likewise, an $n$-qubit gate is represented as a rank 2$n$ tensor. 

\begin{figure}
\centerline{
\begin{tikzcd}[column sep=0.4cm,row sep=0.4cm]
&\qw  & \gate{T}  &\qw    &\ctrl{1}   &   \qw  & \gate{T}  &\qw&\qw\\
&\qw  & \gate{H}  &\qw    &\gate{X}   &   \qw  & \gate{H}  &\qw&\qw\\
\end{tikzcd}

}
    \caption{A quantum circuit with 2 qubits and 5 gates.}
    \label{exp-for-quantum-circuit}
\end{figure}

A little thought shows that a quantum circuit is naturally a tensor network if we view gates as tensors as above. In such a tensor network, each vertex (tensor) represents a quantum state or a quantum gate and each edge a common index of two adjacent tensors. 
The functionality of any quantum circuit involving $n$ qubits is naturally represented as a tensor of rank $2n$, by contracting all the tensors involved, instead of a $2^{n}\times 2^{n}$ transformation matrix. {This shift of perspective not only decreases our cognitive load, potentially, it will also provide a more concise representation of quantum functionality.} 


\begin{figure}
	\centering
	\subfigure[]{
	\includegraphics[width=0.4\textwidth]{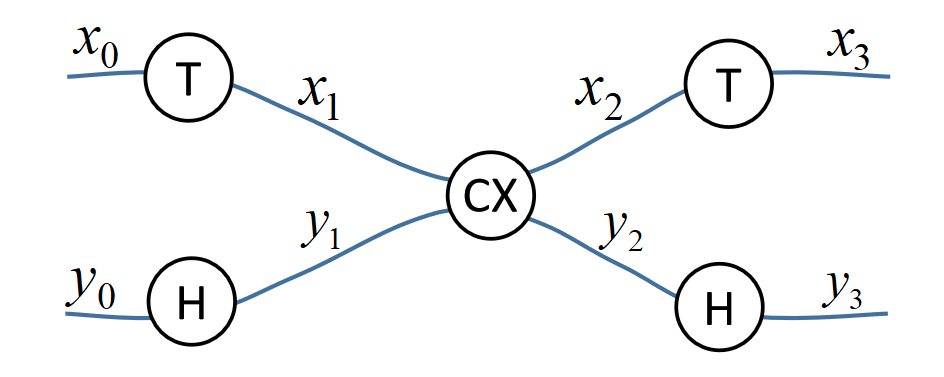}
	}
	\subfigure[]{
	    \includegraphics[width=0.42\textwidth]{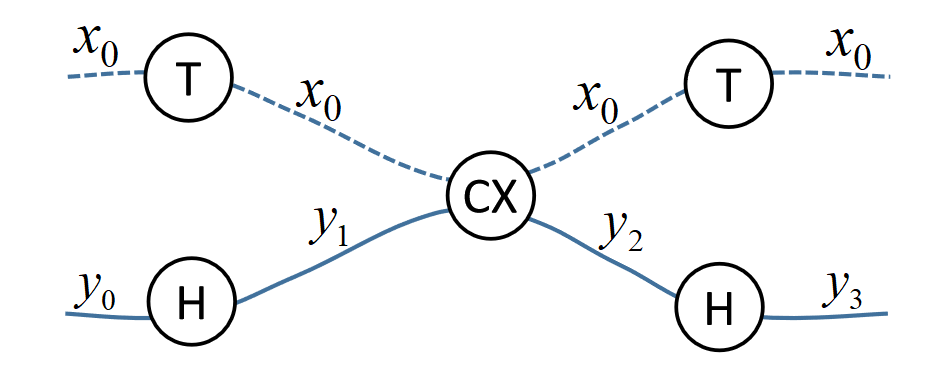}
	 }
	 
	 \caption{Tensor network representations of the  circuit shown in Fig.~\ref{exp-for-quantum-circuit}, where the right one has a hyper-edge (denoted by the dotted line).} \label{exp-for-hyper-edge}
	 
	
\end{figure}

\begin{example}\label{ex:tesnor_of_circuit}
\rmagenta{Consider the circuit shown in Fig.~\ref{exp-for-quantum-circuit}. 
Regarding each gate as a tensor (cf. Fig.~\ref{fig:gates}), Fig.~\ref{exp-for-hyper-edge}(a) shows the tensor network representation of the circuit. By contracting the tensor network, we obtain the tensor representation of the circuit}
%
	\rred{	\begin{align}\label{eq:phi_example}
	\phi_{x_0x_3y_0y_3}(a_0a_3b_0b_3)= \sum_{a_1,a_2,b_1,b_2=0}^{1} \hspace{-5mm} {\T(a_0a_1)\hadamard(b_0b_1)\cx(a_1b_1a_2b_2)\T(a_2a_3)\hadamard(b_2b_3)}.
	\end{align}}
\rmagenta{It is straightforward to check that this tensor 
indeed gives the functionality of the circuit presented in Fig.~\ref{exp-for-quantum-circuit}. For example,   $\phi_{x_0x_3y_0y_3}(1111) = -i$ corresponds to the fact that the circuit maps $\ket{11}$ to $-i\ket{11}$. }
\end{example}


Given a tensor $\phi_{\vec{x}}$ and $x_i, x_j\in \vec{x}$, if $\phi_{\vec{x}}(\vec{a})=0$ whenever $a_i\neq a_j$, we slightly abuse the notation to use an identical index for both $x_i$ and $x_j$. For example, the tenor for \Z\ gate can be written as  $\phi_{xx}$ with $\phi_{xx}(0)=1$ and $\phi_{xx}(1)=-1$. 
Similarly, \cnot\ gate can be represented as a tensor $\phi_{xxy_1y_2}$ with $\phi_{xxy_1y_2}(abc)= a\cdot (b\oplus c) + \overline{a}\cdot \overline{b\oplus c}$, where $\overline{a}$, for example, is the complement of $a$. In \cite{pednault2017breaking}, edges formed by identical indices are called \emph{hyper-edges}.

\begin{example}
For the tensor network shown in Fig.~\ref{exp-for-hyper-edge}(a), the four indices $x_0,x_1,x_2,x_3$ can all be represented by the same index $x_0$ since the two \T\ gates are diagonal and the \cnot\ gate is block diagonal\xblue{, which implies that the tensor value is 0 unless the values of $x_1,x_2,x_3$ are all equal to that of $x_0$.} 
Thus, the tensor network can be modified as the graph shown in Fig.~\ref{exp-for-hyper-edge}(b), where the dotted line is a hyper-edge \rblue{and the corresponding tensor becomes $\phi_{x_0x_0y_0y_3}$}.
\end{example}

\section{Tensor Decision Diagram} \label{sec:tdd}

To fully exploit the benefit of tensor network representation of quantum circuits and the circuit partition technique, a suitable data structure for tensors is desired. In this section, we introduce such a data structure --- Tensor Decision Diagram (TDD).

As decision diagrams, TDDs are similar to ROBDDs \cite{bryant1986graph}, multiplicative binary moment diagrams (*BMDs) \cite{bryant1995verification}, and QMDDs \cite{niemann2015qmdds}, which are designed for representing, respectively, Boolean functions, pseudo–Boolean functions, and $r^{n}\times r^{n}$ matrices. Like $^*$BMDs and QMDDs, TDDs have weights associated with their edges which are combined multiplicatively. In addition, to make the TDD representation of a quantum functionality canonical, several reduction and normalisation rules are also  introduced for TDDs.

While the construction of a QMDD is based on a particular matrix multiplication \cite{niemann2015qmdds}, TDD relies on the Boole-Shannon expansion (see below). For a quantum circuit, nodes in its TDD representation correspond to indices in the circuit (regarded as a tensor network) and each node has two child nodes according the Boole-Shannon expansion. By contrast, nodes in the QMDD representation correspond to qubits in the circuit and each node has four child nodes.

Note that most proofs are deferred to the appendix.



\subsection{Basic Definition}

To begin with, we observe that any tensor $\phi$ can be expanded with respect to a given index in the style of  \emph{Boole-Shannon expansion} for classical Boolean circuits.
\begin{lemma}\label{lem:shannon-expansion}
Let $\phi$ be a tensor with indices in  $I$. For each $x\in I$, 
\begin{align}\label{eq:shannon}
\phi = \overline{x} \cdot \phi|_{x = 0} + x \cdot \phi|_{x = 1},
\end{align}
where $\overline{x}(c) := 1-x(c)$ for $c\in \{0,1\}$.
\end{lemma}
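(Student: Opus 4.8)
The plan is to prove the identity pointwise, by showing that the two sides agree as functions on $\{0,1\}^I$. First I would pin down the notation implicit in the statement: here $x$ denotes the rank-$1$ tensor on the single index $x$ with $x(0)=0$ and $x(1)=1$, so that $\overline{x}(0)=1$ and $\overline{x}(1)=0$. Since the index sets $\{x\}$ and $I'=I\setminus\{x\}$ are disjoint, each of the products $\overline{x}\cdot\phi|_{x=0}$ and $x\cdot\phi|_{x=1}$ is a tensor product, and each yields a tensor over the full index set $I$, matching the domain of $\phi$. Thus both sides of \eqref{eq:shannon} are tensors on $I$, and it suffices to compare their values on every evaluation.

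Next I would fix an arbitrary evaluation $\vec{a}\in\{0,1\}^I$, writing $a_x$ for its value on $x$ and $\vec{a}'$ for its restriction to $I'$. Evaluating the right-hand side at $\vec{a}$ gives
\[
\overline{x}(a_x)\cdot\phi|_{x=0}(\vec{a}') + x(a_x)\cdot\phi|_{x=1}(\vec{a}').
\]
I would then case-split on $a_x$. When $a_x=0$, the coefficient pair $(\overline{x}(0),x(0))=(1,0)$ selects only the first term, which equals $\phi|_{x=0}(\vec{a}')=\phi(0,\vec{a}')=\phi(\vec{a})$ by the definition of slicing. When $a_x=1$, the pair $(\overline{x}(1),x(1))=(0,1)$ selects only the second term, equal to $\phi|_{x=1}(\vec{a}')=\phi(1,\vec{a}')=\phi(\vec{a})$. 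In both cases the right-hand side equals $\phi(\vec{a})$, and since $\vec{a}$ was arbitrary the two tensors coincide.

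I do not expect any genuine obstacle here: the statement is the tensor analogue of the classical Boole--Shannon expansion, and it reduces to the observation that $\overline{x}$ and $x$ act as complementary selectors on the value carried by the index $x$. The only point that requires care is interpretational rather than computational, namely recognising that the multiplications in \eqref{eq:shannon} are tensor products over the disjoint index sets $\{x\}$ and $I'$, so that the scalar factors $\overline{x}(a_x)$ and $x(a_x)$ pick out exactly one slice at each evaluation.
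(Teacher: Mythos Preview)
Your pointwise verification is correct and is exactly the natural way to establish this identity. The paper itself does not supply a proof for this lemma: it is stated as an observation, with only the clarifying remark that $x$ is to be read as the identity tensor on the single index $x$ (mapping $0\mapsto 0$ and $1\mapsto 1$), which is precisely the interpretation you spell out before carrying through the case-split on $a_x$.
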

Note that in above we regard each index $x\in I$ as the identity tensor with only one index $x$, which maps 0 to 0 and 1 to 1.

Recursively using the Boole-Shannon expansion, a tensor can be naturally represented with a decision diagram. 
%
%
%

\begin{definition}[Tensor Decision Diagram]
	A \emph{Tensor Decision Diagram} (TDD) $\F$ over a set of indices $I$ is a rooted, weighted, and directed acyclic graph 
	$\F = (V, E, index, value, low, high, w)$ defined as follows:
	\begin{itemize}
		\item $V$ is a finite set of nodes which is partitioned into non-terminal nodes $V_N$ and terminal ones $V_T$. Denote by $r_\F$ the unique root node of $\F$;
		\item $index: V_N \rightarrow I$ assigns each non-terminal node an index in $I$;
		\item  $value: V_T \rightarrow \mathbb{C}$ assigns each terminal node a complex value;
		\item $low$ and $high$ are mappings in $V_N \rightarrow V$ which assign each non-terminal node with its 0- and 1-successors, respectively;
		\item $E = \{(v, low(v)), (v, high(v)) : v\in V_N\}$ is the set of edges, where $(v, low(v))$ and $(v, high(v)) $ are called the low- and high-edges of $v$, respectively. For simplicity, we also assume the root node $r_\F$ has a unique incoming edge, denoted  $e_r$, which has no source node;
		
		\item $w: E\rightarrow \mathbb{C}$ assigns each edge a complex weight. 		In particular, $w(e_r)$ is called the weight of $\F$, and denoted $w_\F$.
	\end{itemize} 
A TDD is called \emph{trivial} if its root node is also a terminal node.
\end{definition}
For convenience, we often call a terminal node with value $c$ a terminal $c$ node or simply terminal $c$ if it is unique. 

The following example shows how a tensor can be transformed to a TDD using the Boole-Shannon expansion.

\begin{figure}
    \centering
    \includegraphics[width=0.28\textwidth]{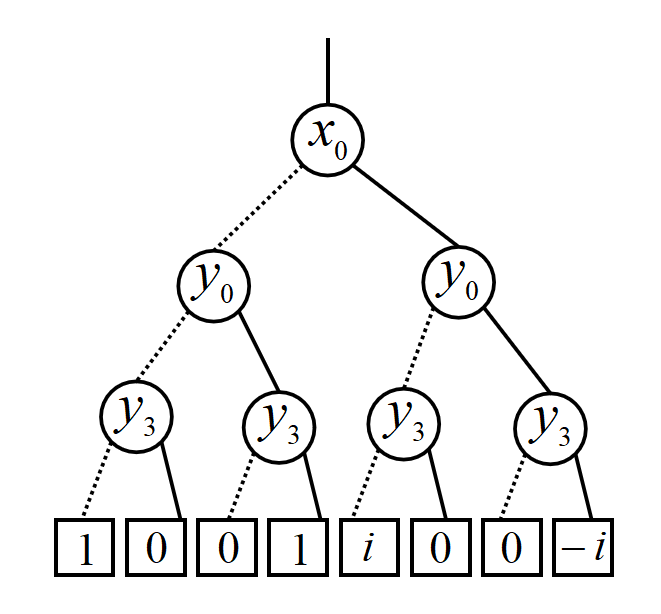}
    \caption{A TDD representation of the tensor in Example 3, where the values of the tensor are stored at the terminal nodes of the TDD.}
    \label{TDD_run_exp1}
\end{figure}

\begin{example}
\rblue{
\rmagenta{Fig.~\ref{TDD_run_exp1} gives the TDD obtained by directly applying the Boole-Shannon expansion to the tensor $\phi_{x_0x_0y_0y_3}$ in Eq.~\ref{eq:phi_example}, where and in all illustrations in this paper we omit the weight of an edge if it is 1. Each terminal node bears a value which, when multiplying with weights along the path to the root node (which happen to be all 1 in this example), corresponds to the value of $\phi$ under the evaluation specified by the path. For example, the terminal node with value $i$ corresponds to the value of $\phi$ under the evaluation $\{x_0\mapsto 1, y_0\mapsto 0, y_3\mapsto 0\}$. Each non-terminal node $v$ acts as a decision node and represents an index $x$, while its low- and high-edges denote evaluations which evaluate $x$ to 0 and, respectively, 1.}}

\end{example}

Conversely, let us see how each node $v$ of a TDD $\F$  naturally corresponds to a tensor $\bphi(v)$.  
If $v$ is a terminal node, then $\bphi(v):=value(v)$ is a rank 0 tensor, i.e., a constant; if $v$ is a non-terminal node, then 
\begin{equation}\label{eq:tdd-intp}
\bphi(v) :=  w_0\cdot \overline{x}_v \cdot \bphi(low(v))
+w_1 \cdot x_v \cdot \bphi(high(v)),
\end{equation}
where $x_v=index(v)$, and $w_0$ and $w_1$ are the weights on the low- and high-edges of $v$, respectively. Comparing Eq.~\ref{eq:tdd-intp} with the Boole-Shannon expansion in Lemma~\ref{lem:shannon-expansion}, we immediately have
\begin{equation}\label{eq:tddnode}
\bphi(v)|_{x_v=c} = w_c\cdot \bphi(v_c),
\end{equation}
where $c\in \{0,1\}$, $v_0 = low(v)$, and $v_1 = high(v)$.

Finally, the tensor represented by $\F$ itself is defined to be 
\begin{align}\label{eq:Phi_of_TDD}
\bphi(\F) := w_\F \cdot \bphi(r_\F).
\end{align}
Recall here that $r_\F$ and $w_\F $ are the root node and the weight of $\F$, respectively.
\rmagenta{An efficient manipulation of general TDDs seems impossible. Following \cite{bryant1986graph}, we restrict our discussion to ordered TDDs.} 
\begin{definition}
A TDD $\mathcal{F}$ is called \emph{ordered} if there is a linear order $\prec$ on $I$ such that $index(v)\prec index(low(v))$ and $index(v)\prec index(high(v))$ for every non-terminal node $v$, provided that both $low(v)$ and $high(v)$ are non-terminal as well. If this is the case, we say $\mathcal{F}$ is a $\prec$-ordered TDD. 

For simplicity, we abuse the notation slightly by assuming $x\prec index(v)$ for all $x\in I$ and all terminal nodes $v\in V_T$. 
\end{definition}

The size of a TDD $\F$, written $|\F|$, is the number of non-terminal nodes of $\F$. As each non-terminal node has two outgoing edges, there are altogether $1+2\times |\F|$ edges, including $e_r$, in  $\F$. Like ROBDDs, the size of the TDD representation strongly relies on the selected variable order. For example, the tensor $\phi = (x_1\cdot x_2)+(x_3 \cdot x_4) + (x_5 \cdot x_6)$ can be represented as a TDD with 6 non-terminal nodes under the order $\prec_1 := (x_1,x_2,x_3,x_4,x_5,x_6)$, but its TDD representation under $\prec_2 := (x_1,x_3,x_5,x_2,x_4,x_6)$ requires at least $2\times (1+2^1+2^2)=14$ non-terminal nodes (cf.~\cite[Ch.3]{molitor2007equivalence}).  While finding an optimal order is NP-hard, there are efficient heuristic methods that have been devised for ROBDDs, which may also be extended to TDDs.


\subsection{Normalisation}
A tensor may have many different TDD representations. For example, let $\F$ be a TDD with root node $r_\F$ and weight $w_\F \not = 0$. A different TDD representing the same tensor can be constructed by, for example, multiplying $w_\F$ by 2 and dividing the weights of the low- and high-edges of $r_\F$ by 2. In order to provide a canonical representation, we first introduce the notion of normal tensors.  
\begin{definition}[normal tensor]\label{dfn:normaltensor}
Let $\phi$ be a tensor with index set $I=\{x_1,\ldots,x_n\}$ and $\prec$ a linear order on $I$. We write 
\begin{align}
\|\phi\| := \max_{\vec{a} \in \{0,1\}^I}|\phi(\vec{a})|
\end{align} 
for the maximum norm of $\phi$. Let $\vec{a}^{*}$ be the first element in $\{0,1\}^I$ (under the lexicographical order induced by $\prec$) which has the maximal 
\rred{magnitude} 
under $\phi$, i.e.,
\begin{align}
\vec{a}^{*}=\min \{\vec{a} \in \{0,1\}^I: |\phi(\vec{a})| = \|\phi\|\}.
\end{align}
We call $\vec{a}^{*}$ the {\emph{pivot}} of $\phi$. A tensor $\phi$ is called {\emph{normal}}  if either $\phi=0$ or $\phi(\vec{a}^*)=1$. 
\end{definition}

It is easy to see that there are tensors $\phi$ with $\|\phi\|=1$ but $\phi$ is not normal. The following lemma shows that any tensor can be normalised in a unique way.

\begin{lemma}\label{lem:tensor-phase}
For any tensor $\phi$ which is not normal, there exists a unique normal tensor $\phi^*$ such that $\phi = p\cdot \phi^*$, where $p$ is a nonzero complex number.
\end{lemma}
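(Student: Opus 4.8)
The plan is to construct $\phi^*$ explicitly from the pivot and then argue uniqueness. Since $\phi$ is not normal, it is in particular nonzero, so $\|\phi\| > 0$ and the pivot $\vec{a}^*$ is well-defined with $|\phi(\vec{a}^*)| = \|\phi\| \neq 0$. The natural candidate for the normalising scalar is $p := \phi(\vec{a}^*)$, a nonzero complex number. I would then \emph{define} $\phi^* := p^{-1}\cdot \phi$ and verify directly that it satisfies the three requirements: that $\phi = p\cdot\phi^*$ (immediate from the definition), that $p$ is nonzero (immediate since $|\phi(\vec{a}^*)| = \|\phi\| > 0$), and that $\phi^*$ is normal.

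For the normality check, the key observation is that scaling a tensor by a nonzero constant does not change which evaluations attain the maximum magnitude, nor their relative order. Formally, $|\phi^*(\vec{a})| = |p|^{-1}|\phi(\vec{a})|$ for every $\vec{a}$, so $\|\phi^*\| = |p|^{-1}\|\phi\| = |p|^{-1}|\phi(\vec{a}^*)|$, and the set of magnitude-maximising evaluations $\{\vec{a} : |\phi^*(\vec{a})| = \|\phi^*\|\}$ coincides with $\{\vec{a} : |\phi(\vec{a})| = \|\phi\|\}$. Hence $\phi^*$ and $\phi$ share the same pivot $\vec{a}^*$. Evaluating at this common pivot gives $\phi^*(\vec{a}^*) = p^{-1}\phi(\vec{a}^*) = p^{-1}\cdot p = 1$, which is exactly the defining condition for $\phi^*$ to be normal.

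For uniqueness, I would suppose $\phi = p_1\cdot\psi_1 = p_2\cdot\psi_2$ with both $\psi_1,\psi_2$ normal and $p_1,p_2$ nonzero. Since $\phi\neq 0$, neither $\psi_i$ is the zero tensor, so each $\psi_i$ satisfies $\psi_i(\vec{a}^*_i) = 1$ at its own pivot. The same scaling argument as above shows that multiplying a normal tensor by a nonzero scalar preserves the pivot location, so both $\psi_1$ and $\psi_2$ have the same pivot $\vec{a}^*$ as $\phi$ itself. Evaluating the identity $p_1\psi_1 = p_2\psi_2$ at $\vec{a}^*$ and using $\psi_1(\vec{a}^*) = \psi_2(\vec{a}^*) = 1$ yields $p_1 = p_2$; since the $p_i$ are nonzero scalars, $\psi_1 = \phi/p_1 = \phi/p_2 = \psi_2$ follows, giving both uniqueness of the scalar and of the normal tensor.

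The only subtlety — and the step I would watch most carefully — is the claim that the pivot is invariant under nonzero scaling. This rests on two facts that must be stated cleanly: first, that $|c\cdot\phi(\vec{a})| = |c|\,|\phi(\vec{a})|$ preserves the \emph{set} of maximisers (because multiplying all magnitudes by the same positive constant $|c|$ does not change which are largest), and second, that the pivot is selected as the lexicographically-\emph{first} maximiser under a \emph{fixed} order $\prec$, so that an unchanged maximiser set forces an unchanged lexicographic minimum. Both are elementary, but the argument genuinely uses that $\prec$ is fixed throughout and that we compare $|\phi|$ rather than $\phi$; overlooking the distinction between preserving the maximiser set and preserving the actual values would be the natural pitfall.
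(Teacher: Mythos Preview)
Your proposal is correct and follows essentially the same approach as the paper: set $p=\phi(\vec{a}^*)$, define $\phi^*=p^{-1}\phi$, and for uniqueness argue that any normal $\phi'$ with $\phi=p'\phi'$ must share the pivot $\vec{a}^*$ with $\phi$, forcing $p'=p$. The only difference is that you spell out carefully why nonzero scaling preserves the pivot (same maximiser set, same lexicographic minimum), whereas the paper dispatches this with ``obviously'' and ``by definition''; your extra care there is warranted but does not change the argument.
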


The uniqueness of the normal tensor in the above lemma suggests the following definition.
\begin{definition}
A TDD $\F$ is called normal if $\bphi(v)$ is a normal tensor for every node $v$ in $\F$.
\end{definition}
It is worth noting that as normal TDDs may still have arbitrary weights, tensors represented by normal TDDs do not have to be normal. Normal TDDs enjoy some nice properties collected in the following two lemmas.

\begin{lemma}\label{lem:sameNormalTDDs}
Suppose $\F$ and $\g$ are two normal TDDs such that $\bphi(\F)=\bphi(\g)$. Then we have $w_\F= w_\g$ and $\bphi(r_\F)=\bphi(r_\g)$.
\end{lemma}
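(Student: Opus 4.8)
The plan is to reduce everything to evaluating the two representations at a single, canonically chosen point. Write $\phi := \bphi(\F) = \bphi(\G)$. By Eq.~\ref{eq:Phi_of_TDD} we have $\phi = w_\F\cdot\bphi(r_\F) = w_\G\cdot\bphi(r_\G)$, and since both TDDs are normal, $\bphi(r_\F)$ and $\bphi(r_\G)$ are normal tensors over the same index set and order. The strategy is to pin down $w_\F$ (and symmetrically $w_\G$) from $\phi$ alone by evaluating at the pivot of $\phi$, and then to recover $\bphi(r_\F)=\bphi(r_\G)$ by cancelling the common nonzero scalar.

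First I would dispose of the degenerate case $\phi=0$ separately (discussed below) and assume from now on that $\phi\neq 0$; then neither $w_\F$ nor $\bphi(r_\F)$ can vanish, and likewise for $\G$. The key observation is that scaling a tensor by a nonzero constant rescales every entry by the same magnitude, so it preserves the set of entries attaining the maximum norm. Concretely, since $|\phi(\vec{a})| = |w_\F|\cdot|\bphi(r_\F)(\vec{a})|$ for every $\vec{a}$, the tensors $\phi$ and $\bphi(r_\F)$ have exactly the same set of maximal-magnitude evaluations, hence the same lexicographically first one. Thus $\phi$ and $\bphi(r_\F)$ share the same pivot $\vec{a}^*$; by the same argument so does $\bphi(r_\G)$.

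With the common pivot $\vec{a}^*$ in hand, the conclusion drops out by evaluation: because $\bphi(r_\F)$ is normal and nonzero, $\bphi(r_\F)(\vec{a}^*)=1$, whence $w_\F = w_\F\cdot\bphi(r_\F)(\vec{a}^*) = \phi(\vec{a}^*)$. The symmetric computation gives $w_\G = \phi(\vec{a}^*)$, so $w_\F = w_\G$. As this common value is nonzero, dividing the identity $w_\F\cdot\bphi(r_\F)=w_\G\cdot\bphi(r_\G)$ by it yields $\bphi(r_\F)=\bphi(r_\G)$. When $\phi$ happens to be non-normal one could instead quote the uniqueness part of Lemma~\ref{lem:tensor-phase} directly, but the pivot computation covers the normal and non-normal cases uniformly, so I would present it that way.

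The step I expect to require the most care is the degenerate case $\phi=0$. Normality alone does not pin down the root weight of a TDD representing the zero tensor: the trivial TDD pointing at terminal $0$ admits an arbitrary incoming weight, and even a non-terminal root can represent $0$ when both of its outgoing weights are $0$, so the bare statement would fail here. I would therefore appeal to the normalisation convention that the magnitude of the root weight equals the maximum norm of the represented tensor, so that $\|\phi\|=0$ forces $w_\F=w_\G=0$ (and $\bphi(r_\F)=\bphi(r_\G)=0$ is then immediate). This is precisely the property advertised for normal/reduced TDDs, and it is the one place where slightly more than the literal definition of ``normal'' is needed; isolating and justifying it cleanly is the main obstacle.
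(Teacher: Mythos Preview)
Your argument for the nonzero case is correct and is essentially the paper's approach, just unrolled: the paper simply invokes Lemma~\ref{lem:tensor-phase} to conclude that the decomposition $\phi = p\cdot\phi^*$ with $\phi^*$ normal is unique, whereas you redo the pivot computation underlying that lemma. Both arrive at $w_\F = \phi(\vec{a}^*) = w_\G$ and then cancel.

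Your caution about $\phi = 0$ is well placed---the paper's one-line proof elides this case entirely---but your proposed fix does not work. The ``root weight has magnitude equal to the maximum norm'' property is asserted in the introduction only for \emph{reduced} TDDs, not for arbitrary normal ones; under the bare definition, a trivial TDD with terminal value $0$ and incoming weight $5$ is normal yet has $|w_\F|=5\neq 0=\|\phi\|$. Worse, even granting $w_\F=w_\G=0$, the second conclusion $\bphi(r_\F)=\bphi(r_\G)$ still would not follow: once the root weight is $0$ the root node may represent any normal tensor whatsoever (e.g.\ terminal $1$ versus terminal $0$), so two normal TDDs of the zero tensor can have $\bphi(r_\F)\neq\bphi(r_\G)$. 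In short, the lemma as literally stated already fails at $\phi=0$; the paper's sole later use of it (in Theorem~\ref{thm:canonicity}) handles the zero tensor separately via the reduction conditions, so the gap is harmless in context, but it cannot be repaired by appealing only to the definition of a normal TDD.
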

\begin{proof}
By Eq.~\ref{eq:Phi_of_TDD}, we have $\bphi(\F)=w_\F \cdot \bphi(r_\F)$ and $\bphi(\g)=w_\g \cdot \bphi(r_\g)$. Because $\bphi(r_\F)$ and $\bphi(r_\g)$ are normal tensors and $\bphi(\F)=\bphi(\g)$, by Lemma~\ref{lem:tensor-phase}, we know 
$w_\F=w_\g$ and $\bphi(r_\F)=\bphi(r_\g)$.
\end{proof}

For any non-normal TDD $\F$, we can transform it into a normal one by applying the following two rules.

\vspace{1em}

\noindent\textbf{Normalisation Rules. }  
\begin{itemize}
\item [NR1:] If $v$ is a terminal node with a nonzero value $value(v) \not=1$, then set its value to 1, and change the weight $w$ of each incoming edge of $v$ to $value(v)\cdot w$.

\item [NR2:] 
Suppose $v$ is a non-terminal node such that $\bphi(v)\not=0$ is not normal but both $\bphi(low(v))$ and $\bphi(high(v))$ are normal. Let $w_0$ and $w_1$ be the weights on the low- and high edges of $v$ respectively. If $\bphi(low(v))\neq 0$ and either $\bphi(high(v))=0$ or $|w_0| \geq |w_1|$, we set $w$ to be $w_0$. Otherwise, set it to be $w_1$. Divide $w_0$ and $w_1$ by $w$ and multiply the weight of each incoming edge of $v$ by $w$.

\end{itemize}

Let $\F$ be a non-normal TDD. We first apply NR1 to every terminal node of $\F$ to make it normal. Furthermore, suppose a non-terminal node $v$ of $\F$ represents a non-normal tensor but both its successors represent normal tensors. It is easy to see that, after applying NR2 to $v$, node $v$ itself represents a normal tensor. This gives a procedure to transform $\F$ into a normal TDD in a bottom-up manner. Furthermore, the transformation can be done within time linear in the size of $\F$.

\begin{theorem}
Applying a normalisation rule to a TDD does not change the tensor it represents. Moreover, a TDD is normal if and only if no normalisation rule is applicable.
\end{theorem}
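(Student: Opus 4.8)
The plan is to treat the two assertions separately. For the invariance claim, the key observation is that $\bphi(\F)=w_\F\cdot \bphi(r_\F)$ depends on the TDD only through the weight $w_\F$ of the incoming edge $e_r$ and the tensor $\bphi(r_\F)$, which is built recursively by Eq.~\ref{eq:tdd-intp}; in turn, $\bphi(v)$ at any node depends only on its index, the weights on its two outgoing edges, and the tensors at its successors. Each normalisation rule rescales the tensor at a single node and simultaneously rescales the weights of that node's incoming edges by the reciprocal factor, so I expect all these products to cancel and leave $\bphi(\F)$ fixed.

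Concretely, for NR1 applied to a terminal $v$ with value $c=value(v)$, the modified node has $\bphi_{new}(v)=1=\bphi(v)/c$, while every incoming edge weight $w$ becomes $c\cdot w$. Substituting into Eq.~\ref{eq:tdd-intp} for each parent of $v$ (or into Eq.~\ref{eq:Phi_of_TDD} when $v=r_\F$) shows that each summand $w\cdot \bphi(v)$ is unchanged, so $\bphi$ is unchanged at every parent and hence, since $\F$ is a finite acyclic graph, at all ancestors up to the root; thus $\bphi(\F)$ is preserved. For NR2 applied to a non-terminal $v$ with chosen factor $w$, reading off Eq.~\ref{eq:tdd-intp} with the divided weights $w_0/w$ and $w_1/w$ gives $\bphi_{new}(v)=\bphi(v)/w$, and again each incoming edge weight is multiplied by $w$, so the product at each parent is unchanged. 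The one point needing care here is that $w\neq 0$, so the division is legitimate: since NR2 requires $\bphi(v)\neq 0$, a short case analysis on the rule's selection of $w\in\{w_0,w_1\}$ (noting that a vanishing factor would force the corresponding summand in Eq.~\ref{eq:tdd-intp} to disappear, contradicting $\bphi(v)\neq 0$ in the relevant branch) shows $w$ is nonzero in every case.

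For the second assertion I prove both directions. If $\F$ is normal then, by definition, $\bphi(v)$ is normal for every node, so NR2 has no applicable node (its hypothesis demands a non-normal $\bphi(v)$), and by Lemma~\ref{lem:normalTDDedge} every terminal has value $0$ or $1$, so NR1 has no applicable node either. For the converse I argue by contraposition: assume $\F$ is not normal, so the set of nodes whose tensor is non-normal is nonempty. Because $\F$ is finite and acyclic, I can choose such a node $v$ none of whose proper descendants is non-normal; in particular both successors of $v$, when they exist, represent normal tensors. If $v$ is terminal, its value is neither $0$ nor $1$ (as both give normal rank-$0$ tensors), so it is a nonzero value $\neq 1$ and NR1 applies. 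If $v$ is non-terminal, then $\bphi(v)$ is non-normal and hence nonzero, while both successors are normal, so NR2 applies. In either case a rule is applicable, which establishes the contrapositive.

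The invariance computation is routine, its only genuine subtlety being the verification that $w\neq 0$ in NR2. The main structural step is the converse of the second assertion: one must exhibit a single node meeting the precise precondition of a rule, and the choice of a non-normal node with no non-normal proper descendant is exactly what guarantees that the successors are already normal, so that the rule becomes applicable. I would phrase this minimality purely through acyclicity of the underlying DAG rather than through the order $\prec$, since the normalisation rules are stated for arbitrary, not necessarily ordered, TDDs.
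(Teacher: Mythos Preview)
Your proposal is correct and follows essentially the same approach as the paper: invariance is read off from the recursive definition Eq.~\ref{eq:tdd-intp} (and Eq.~\ref{eq:Phi_of_TDD}), and the equivalence is argued directly from the definitions of normal tensor and normal TDD. The paper's own proof is a two-line pointer to these facts; you have supplied the details the paper omits, including the verification that the divisor $w$ in NR2 is nonzero and the minimality choice of a non-normal node with normal descendants for the contrapositive, both of which are sound and in the spirit of the paper's terse argument.
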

\begin{proof}
The first part of the theorem follows from Eq.~\ref{eq:tdd-intp}, and the second directly from the definitions.
\end{proof}

\begin{figure}
    \centering
    \subfigure[]{
    \includegraphics[width=0.28\textwidth]{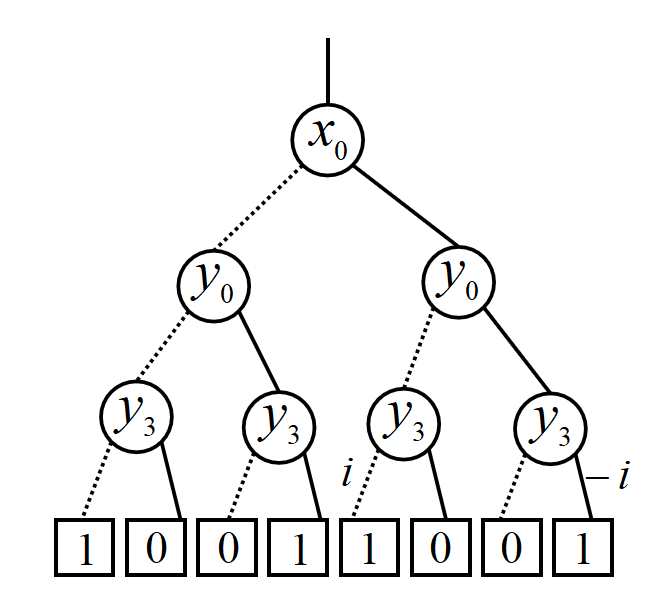}
    }
    \subfigure[]{
    \includegraphics[width=0.28\textwidth]{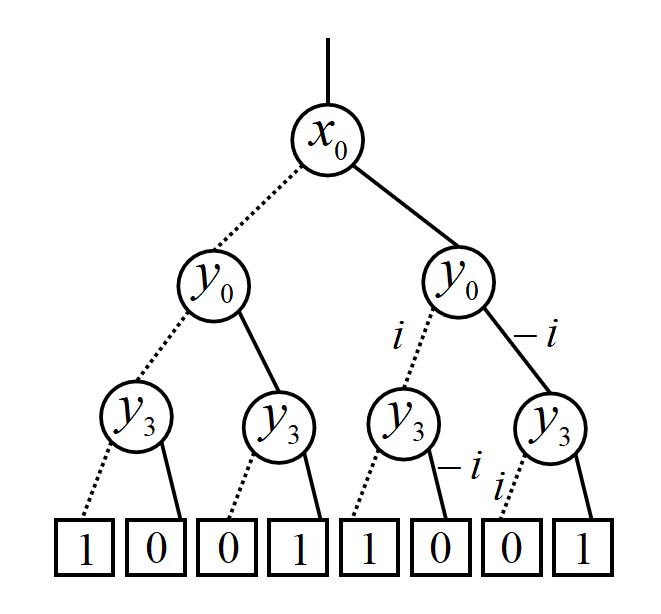}
    }
    \subfigure[]{
    \includegraphics[width=0.28\textwidth]{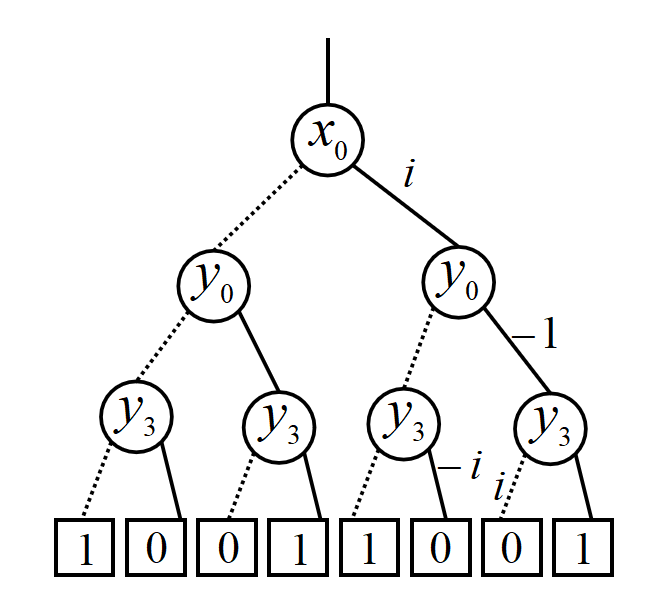}
    }
    \caption{Normalisation of the TDD shown in Fig.~\ref{TDD_run_exp1}, where nodes are normalised, from bottom to top, by applying NR1 or NR2 step by step.
    }
    \label{run_exp_normal}
\end{figure}

\begin{example}\label{ex:normalisation}
\rblue{Applying NR1 to the two terminal nodes labelled with $i$ and $-i$ in the TDD in Fig.~\ref{TDD_run_exp1}, we have the TDD as shown in Fig.~\ref{run_exp_normal}(a). Then, applying NR2 to the right two $y_3$ nodes gives the TDD in Fig.~\ref{run_exp_normal}(b). The normalised TDD, shown in Fig.~\ref{run_exp_normal} (c),  is obtained by applying NR2 to the right $y_0$ node.}

\end{example}

We have seen how to transform an existing TDD into a normal one. In contrast, the following theorem asserts that every tensor has a normal TDD representation. Actually, the proof of this theorem also provides a way to construct a normal TDD directly from a given tensor.



\begin{theorem}\label{thm:complete}
Let $I = \{x_1,x_2,...,x_n\}$ be a set of indices and $\prec$ a linear order on it. For any tensor $\phi$ with index set $I$, there exists a $\prec$-ordered normal TDD $\F$ such that $\bphi(\F) = \phi$. 
\end{theorem}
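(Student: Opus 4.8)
The plan is to proceed by induction on the number $n=|I|$ of indices, using the Boole--Shannon expansion (Lemma~\ref{lem:shannon-expansion}) to peel off the $\prec$-least index at each stage and then normalising the freshly created root. Without loss of generality assume $x_1\prec x_2\prec\cdots\prec x_n$. For the base case $n=0$, $\phi$ is a scalar $c$: if $c=0$ take $\F$ to be the trivial TDD whose root is the terminal $0$ node, and if $c\neq 0$ take the root to be the terminal $1$ node with $w_\F=c$. In both cases $\bphi(\F)=c=\phi$ by Eq.~\ref{eq:Phi_of_TDD}, and the lone terminal node is normal, since a rank-$0$ tensor is normal exactly when it is $0$ or $1$.

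For the inductive step, let $x=x_1$ and form the two slicings $\phi|_{x=0}$ and $\phi|_{x=1}$, each a tensor over $I'=\{x_2,\ldots,x_n\}$. By the induction hypothesis there are $\prec$-ordered normal TDDs $\F_0$ and $\F_1$ with $\bphi(\F_0)=\phi|_{x=0}$ and $\bphi(\F_1)=\phi|_{x=1}$. I would then introduce a fresh root node $v$ with $index(v)=x$, $low(v)=r_{\F_0}$, $high(v)=r_{\F_1}$, placing weights $w_{\F_0}$ and $w_{\F_1}$ on its low- and high-edges and setting $w_\F=1$. Combining Eq.~\ref{eq:tdd-intp} with Eq.~\ref{eq:Phi_of_TDD} gives $\bphi(v)=\overline{x}\cdot\bphi(\F_0)+x\cdot\bphi(\F_1)=\overline{x}\cdot\phi|_{x=0}+x\cdot\phi|_{x=1}$, which equals $\phi$ by Lemma~\ref{lem:shannon-expansion}. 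The ordering condition holds because $index(v)=x_1\prec index(w)$ for every non-terminal successor $w$, and $\F_0,\F_1$ are $\prec$-ordered by hypothesis, so the assembled graph is a $\prec$-ordered TDD whose root represents $\phi$ and all of whose nodes except possibly $v$ are already normal.

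It then remains only to make $v$ normal. If $\bphi(v)=\phi$ is already normal (in particular when $\phi=0$, which is normal by definition), nothing need be done. Otherwise I would apply rule NR2 to $v$, whose guard is satisfied precisely because both children represent normal tensors. By the theorem asserting that applying a normalisation rule preserves the represented tensor, this keeps $\bphi(\F)=\phi$; and since NR2 touches only the weight of $e_r$ and of the two edges leaving $v$, it leaves $\bphi(u)$ unchanged for every descendant $u$ and thereby renders $v$ normal without disturbing the normality secured inductively for the rest of the graph.

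The only genuinely delicate point, which I would take care to state explicitly, is this interaction between normalisation and the inductive invariant: one must observe that $\bphi(u)$ depends only on $u$'s own descendants and on the edge weights strictly below $u$, so reweighting the incoming edge of $v$ and its outgoing edges cannot alter $\bphi$ at any descendant of $v$. A secondary check is that the guard of NR2 is met in the non-trivial case, which follows from the normality of $\F_0$ and $\F_1$ together with the case split on whether $\phi$ itself is normal. Everything else is a routine verification from the definitions of $\bphi(\F)$, $\prec$-orderedness, and the normalisation rules.
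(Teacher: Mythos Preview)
Your proposal is correct and follows essentially the same route as the paper: induction on $|I|$, Boole--Shannon expansion at the $\prec$-least index, assembling the two inductively obtained normal TDDs under a fresh root with weight $1$, and then applying NR2 at that root. Your treatment is in fact slightly more careful than the paper's in that you explicitly separate the case where $\bphi(v)$ is already normal (so NR2 is inapplicable) and you spell out why reweighting at $v$ cannot disturb the normality of descendants.
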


\subsection{Reduction}
 \rmagenta{As can been seen from Fig.~\ref{run_exp_normal}, normal TDDs  may still have redundant nodes. For example, the first and the third $y_3$ nodes of the normal TDD in Fig.~\ref{run_exp_normal}(c) have the same low- and high-edges and thus represent the same tensor.} This fact motivates us to further introduce: 
\begin{definition}\label{def:recuded}
A TDD $\F$ is called \emph{reduced} if it is normal and
\begin{enumerate}
\item no node represents the 0 tensor, i.e., $\bphi(v)\not= 0$ for every node $v$ in $\F$; 
\item all edges weighted 0 point to the (unique) terminal 1; 
and 
\item no two different nodes represent the same tensor, i.e., $\bphi(u)\not= \bphi(v)$ for any two nodes $u\not=v$ in $\F$.
\end{enumerate} 
\end{definition}

The following lemma shows that every non-terminal node of a reduced TDD $\F$ is labelled with an essential variable of the tensor represented by $\F$.

\begin{lemma}\label{lem: nodes_in_reducedTDD_are_essential}
Suppose $\F$ is a reduced TDD of a non-constant tensor $\phi$ over index set $I$. Then every non-terminal node of $\F$ is labelled with an index that is essential to $\phi$.  
\end{lemma}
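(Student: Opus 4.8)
The plan is to separate the statement into a \emph{local} fact about each node and a \emph{lifting} step that transfers this fact to the global tensor $\phi = \bphi(\F)$. Recall that an index $x$ is essential for a tensor exactly when its two slicings at $x=0$ and $x=1$ differ. First I would establish the local claim: for every non-terminal node $v$ of a reduced TDD, the index $x_v = index(v)$ is essential to the tensor $\bphi(v)$ that $v$ itself represents. Then I would lift this to conclude that $x_v$ is also essential to $\phi$.

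For the local claim I argue by contradiction. Suppose $x_v$ is not essential to $\bphi(v)$. By Eq.~\ref{eq:tddnode} we have $\bphi(v)|_{x_v=0} = w_0\cdot\bphi(low(v))$ and $\bphi(v)|_{x_v=1}=w_1\cdot\bphi(high(v))$, so the assumption reads $w_0\cdot\bphi(low(v)) = w_1\cdot\bphi(high(v))$. Since $\F$ is reduced, no node represents the zero tensor by Definition~\ref{def:recuded}, so $\bphi(low(v))$ and $\bphi(high(v))$ are nonzero normal tensors; combining this with Lemma~\ref{lem:normalTDDedge} (which forces $w_0=1$ or $w_1=1$), a short argument rules out $w_0=0$ and $w_1=0$. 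The key sub-step is the observation that two nonzero \emph{normal} tensors that are scalar multiples of one another must be identical with scalar $1$ — this follows from the pivot-based uniqueness already exploited in the proof of Lemma~\ref{lem:tensor-phase}. Hence $w_0=w_1$ and $\bphi(low(v))=\bphi(high(v))$, and I close with a small case split: if $low(v)\neq high(v)$, then two distinct nodes represent the same tensor, violating Definition~\ref{def:recuded}; if $low(v)=high(v)$, then $w_0=w_1=1$ by normality and Eq.~\ref{eq:tdd-intp} gives $\bphi(v)=\bphi(low(v))$, again violating Definition~\ref{def:recuded} since $v\neq low(v)$.

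For the lifting step, fix the non-terminal node $v$ and a directed path $r_\F=v_0,v_1,\dots,v_m=v$ from the root. Because $\F$ is ordered, the indices decided along the path are strictly $\prec x_v$, and repeatedly applying Eq.~\ref{eq:tddnode} together with Eq.~\ref{eq:Phi_of_TDD} yields $\phi|_{\sigma} = W\cdot\bphi(v)$, where $\sigma$ fixes exactly the path indices to their chosen $0/1$ values and $W$ is the product of $w_\F$ with all edge weights along the path. Here $W\neq 0$: since $\phi$ is non-constant we have $w_\F\neq 0$, and since $v$ is non-terminal none of the traversed edges can be weighted $0$ (by Definition~\ref{def:recuded} any $0$-weighted edge points to the terminal, a dead end). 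As $\sigma$ does not touch $x_v$, slicing commutes with the restriction $\sigma$, so $(\phi|_{x_v=c})|_\sigma = W\cdot\bphi(v)|_{x_v=c}$ for $c\in\{0,1\}$. The local claim gives $\bphi(v)|_{x_v=0}\neq\bphi(v)|_{x_v=1}$, and multiplying by the nonzero $W$ preserves this inequality; hence $\phi|_{x_v=0}$ and $\phi|_{x_v=1}$ already differ after restriction by $\sigma$, so they differ, i.e.\ $x_v$ is essential to $\phi$.

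I expect the main obstacle to be the local claim, specifically the sub-step that nonzero normal tensors which are scalar multiples coincide: one must compare the two children as tensors over the \emph{same} index set and under the same order $\prec$ so that they share a common pivot, and one must correctly exclude zero weights using Lemma~\ref{lem:normalTDDedge}. The lifting step is mostly bookkeeping, with the only genuine care needed in justifying $W\neq 0$ and that slicing commutes with the path restriction $\sigma$.
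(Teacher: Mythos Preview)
Your argument is correct and the core coincides with the paper's: both derive from the assumed non-essentiality that $w_0\cdot\bphi(low(v))=w_1\cdot\bphi(high(v))$, invoke normality of the children together with Lemma~\ref{lem:tensor-phase} to conclude $w_0=w_1$ and $\bphi(low(v))=\bphi(high(v))$, then use Lemma~\ref{lem:normalTDDedge} to get $w_0=w_1=1$ and hence $\bphi(v)=\bphi(low(v))$, contradicting reducedness. The one substantive difference is that you explicitly separate and prove the lifting step (that essentiality at $v$ propagates to the global tensor $\phi$ via a nonzero path weight $W$), whereas the paper jumps directly from ``$x$ non-essential for $\phi$'' to ``$\bphi(v)|_{x=0}=\bphi(v)|_{x=1}$'' without justification; your version thus closes a small gap that the paper leaves implicit.
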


The following definition of sub-TDDs is useful in our later discussion. Recall that we assume $x\prec index(v)$ for all $x\in I$ and all terminal nodes $v$.
\begin{definition}\label{def:subtdd}
	Let $\F$ be a reduced TDD over a $\prec$-linearly ordered index set $I$.  Let $x\in I$, and $x\preceq index(r_\F)$. We define sub-TDDs $\F_{x=0}$ and $\F_{x=1}$ of $\F$ as follows.
	\begin{enumerate}
		\item If $x \prec  index(r_\F)$, then $\F_{x=0} =\F_{x=1} = \F$;
		\item If $x = index(r_\F)$, $\F_{x=0}$ is defined as the TDD rooted at $low(r_\F)$ with weight  $w_\F \cdot w(r_\F, low(r_\F))$, i.e., the weight of the low-edge of $r_\F$ multiplied by the weight of $\F$. Analogously, we have $\F_{x=1}$.
	\end{enumerate}
\end{definition}
Corresponding to the Boolean-Shannon expansion for tensors (cf. Eq.~\ref{eq:shannon}), we have
\begin{lemma}\label{lem:tdd-shannon}
Suppose $\F$ is a reduced TDD on $I$, $x\in I$ and $x\preceq index(r_\F)$. Then we have 
\begin{align}\label{eq:tdd-shannon}
\bphi(\F) = \overline{x} \cdot \bphi(\F_{x=0}) + x \cdot \bphi(\F_{x=1}).
\end{align}
\end{lemma}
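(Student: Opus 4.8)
The plan is to reduce the statement to the single claim that $\bphi(\F_{x=c}) = \bphi(\F)|_{x=c}$ for each $c\in\{0,1\}$, and then invoke the Boole-Shannon expansion already established in Lemma~\ref{lem:shannon-expansion}. Applying that lemma to the tensor $\bphi(\F)$ with respect to the index $x$ yields $\bphi(\F) = \overline{x}\cdot\bphi(\F)|_{x=0} + x\cdot\bphi(\F)|_{x=1}$, so once the claim is in hand, substituting $\bphi(\F)|_{x=c} = \bphi(\F_{x=c})$ gives exactly Eq.~\ref{eq:tdd-shannon}. Everything therefore hinges on matching the slicing of the represented tensor against the sub-TDD construction of Definition~\ref{def:subtdd}, which I would handle by the same case split used in that definition.

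In the first case, $x \prec index(r_\F)$, so by the ordering property (and the convention that every terminal index dominates all of $I$) no node of $\F$ carries the label $x$. Consequently the interpretation $\bphi(r_\F)$ built bottom-up via Eq.~\ref{eq:tdd-intp} never introduces a dependence on $x$, so $\bphi(\F)$, viewed as a tensor over an index set containing $x$, is constant in $x$ and satisfies $\bphi(\F)|_{x=0} = \bphi(\F)|_{x=1} = \bphi(\F)$. Since Definition~\ref{def:subtdd} sets $\F_{x=0}=\F_{x=1}=\F$ in this case, the claim $\bphi(\F_{x=c})=\bphi(\F)|_{x=c}$ holds immediately.

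In the second case, $x = index(r_\F)$, the argument is a direct computation. Write $v_0 = low(r_\F)$ and $v_1 = high(r_\F)$, with $w_c = w(r_\F, v_c)$ the weight on the corresponding edge. Equation~\ref{eq:tddnode} gives $\bphi(r_\F)|_{x=c} = w_c\cdot\bphi(v_c)$, and combining this with $\bphi(\F) = w_\F\cdot\bphi(r_\F)$ from Eq.~\ref{eq:Phi_of_TDD} yields $\bphi(\F)|_{x=c} = w_\F\cdot w_c\cdot\bphi(v_c)$. On the other hand, Definition~\ref{def:subtdd} defines $\F_{x=c}$ to be the TDD rooted at $v_c$ with weight $w_\F\cdot w_c$, so $\bphi(\F_{x=c}) = (w_\F\cdot w_c)\cdot\bphi(v_c)$ by Eq.~\ref{eq:Phi_of_TDD}. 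The two expressions coincide, establishing the claim here as well.

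The only genuinely delicate point is the first case: one must be careful to treat $\bphi(\F)$ as a tensor whose formal index set includes $x$ even though no node realises $x$, and to justify via the ordering that this tensor is independent of $x$. The second case is purely mechanical once the sub-TDD weights are read off from Definition~\ref{def:subtdd}. With the claim proved in both cases, the lemma follows by the single application of Lemma~\ref{lem:shannon-expansion} described above.
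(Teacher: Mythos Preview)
Your proposal is correct and follows exactly the route the paper intends: the paper states the lemma without proof, presenting it simply as ``corresponding to the Boole--Shannon expansion for tensors,'' and your argument spells out precisely this---show $\bphi(\F_{x=c}) = \bphi(\F)|_{x=c}$ via the two cases of Definition~\ref{def:subtdd} (using the ordering in the first case and Eqs.~\ref{eq:tddnode}--\ref{eq:Phi_of_TDD} in the second), then invoke Lemma~\ref{lem:shannon-expansion}.
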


Now we are ready to prove the canonicity of reduced TDDs. 
Two TDDs $\F$ and $\g$ are said to be isomorphic, denoted $\F\eqsim \g$, if they are equal up to renaming of the nodes; that is, there exists a graph isomorphism between $\F$ and $\g$ which preserves node indices, edge weights, and values on terminal nodes. Furthermore, it maps low-edges to low-edges and high-edges to high-edges.

\begin{theorem}[canonicity]\label{thm:canonicity}
Let $I$ be an index set and $\prec$ a linear order on $I$. Suppose $\F$ and $\g$ are two  $\prec$-ordered, reduced TDDs over $I$ with $\bphi(\F)=\bphi(\g)$. Then $\F\eqsim\g$. 
\end{theorem}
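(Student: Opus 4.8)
The plan is to proceed by strong induction on the number of indices $|I|$ (equivalently, on the combined size of $\F$ and $\G$), mirroring Bryant's canonicity argument for ROBDDs but additionally tracking the edge weights introduced by normalisation. At the top level the root weight is pinned down immediately: since $\F$ and $\G$ are reduced, hence normal, and $\bphi(\F)=\bphi(\G)$, Lemma~\ref{lem:sameNormalTDDs} gives $w_\F=w_\G$ and $\bphi(r_\F)=\bphi(r_\G)$. For the base/degenerate case I would treat the situation where $\bphi(\F)$ is constant: a reduced TDD containing a non-terminal node represents a non-constant tensor, since its root index is essential by Lemma~\ref{lem: nodes_in_reducedTDD_are_essential}; hence both $\F$ and $\G$ must be trivial, with the unique terminal $1$ as root (a terminal $0$ is forbidden by condition~1 of Definition~\ref{def:recuded}), and together with $w_\F=w_\G$ this gives $\F\eqsim\G$ at once.

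For the inductive step, assume $\bphi(\F)=\bphi(\G)$ is non-constant. The first key step is to show the root indices agree, $index(r_\F)=index(r_\G)=:x$. I would argue that in a $\prec$-ordered reduced TDD the root index is exactly the $\prec$-least essential index of the represented tensor: any index $y\prec index(r_\F)$ labels no node, so $\F_{y=0}=\F_{y=1}=\F$ by Definition~\ref{def:subtdd}, and Lemma~\ref{lem:tdd-shannon} then forces $\bphi(\F)|_{y=0}=\bphi(\F)|_{y=1}$, i.e.\ $y$ is inessential, while $index(r_\F)$ itself is essential by Lemma~\ref{lem: nodes_in_reducedTDD_are_essential}. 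Since $\bphi(\F)=\bphi(\G)$, their $\prec$-least essential indices coincide. With $x$ fixed, Lemma~\ref{lem:tdd-shannon} gives $\bphi(\F_{x=c})=\bphi(\F)|_{x=c}=\bphi(\G)|_{x=c}=\bphi(\G_{x=c})$ for $c\in\{0,1\}$; as each $\F_{x=c},\G_{x=c}$ is a $\prec$-ordered reduced TDD over $I\setminus\{x\}$ (reducedness and normality are inherited by sub-diagrams), the induction hypothesis yields isomorphisms $\sigma_0\colon\F_{x=0}\eqsim\G_{x=0}$ and $\sigma_1\colon\F_{x=1}\eqsim\G_{x=1}$. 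Writing $w_c^\F$ for the weight on the low-edge ($c=0$) or high-edge ($c=1$) of $r_\F$, matching the incoming weights of these sub-TDDs via Lemma~\ref{lem:sameNormalTDDs} gives $w_\F\cdot w_c^\F=w_\G\cdot w_c^\G$, whence the low- and high-edge weights of $r_\F$ and $r_\G$ agree because $w_\F=w_\G$ is nonzero (as $\bphi(\F)$ is non-constant).

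The main obstacle is the final gluing step: the maps $\sigma_0$ and $\sigma_1$ are defined on the sub-DAGs $\F_{x=0}$ and $\F_{x=1}$, which may \emph{share nodes}, so I must verify that they agree wherever their domains overlap before combining them, with the common root $r_\F\mapsto r_\G$, into a global isomorphism. This is exactly where reducedness is indispensable: a TDD isomorphism preserves the tensor attached to each node, so for a shared node $u$ both $\sigma_0(u)$ and $\sigma_1(u)$ represent $\bphi(u)$; by condition~3 of Definition~\ref{def:recuded} there is at most one node in $\G$ with a given tensor, which forces $\sigma_0(u)=\sigma_1(u)$. Hence the union of the two maps, extended by $r_\F\mapsto r_\G$, is well defined, and by construction it preserves indices, low/high structure, edge weights and terminal values, i.e.\ $\F\eqsim\G$. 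I expect this consistency-on-overlap check (the DAG-sharing phenomenon, absent in the tree case) to be the only genuinely delicate point; everything else is bookkeeping resting on the normalisation and essentiality lemmas already established.
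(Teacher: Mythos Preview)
Your proposal is correct and follows essentially the same route as the paper: induction on $|I|$, with Lemma~\ref{lem:sameNormalTDDs} pinning down the root weight, Lemma~\ref{lem: nodes_in_reducedTDD_are_essential} forcing the root indices to coincide (as the $\prec$-least essential index), and Lemma~\ref{lem:tdd-shannon} splitting into the two sub-TDDs on which the induction hypothesis applies. The paper's argument is terser and simply asserts $\F\eqsim\G$ once $\F_{x=c}\eqsim\G_{x=c}$ and $index(r_\F)=index(r_\G)$; your explicit treatment of the edge-weight matching and, especially, of the consistency of $\sigma_0$ and $\sigma_1$ on shared nodes via condition~3 of Definition~\ref{def:recuded} fills in details the paper leaves implicit, but does not change the overall strategy.
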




A reduced TDD can be obtained by applying the following reduction rules on any normal TDD in a bottom-up manner. 

\vspace{1em}

\noindent\textbf{Reduction rules. } 
\begin{itemize}
\item [RR1:] 
Merge all terminal 1 nodes.
Delete all terminal 0 ones, if exist, and redirect their incoming edges to the (unique) terminal and reset their weights to 0.  

\item [RR2:] Redirect all weight-0 edges to the terminal. 
If these include the incoming edge of the root node, then the terminal becomes the new root. Delete all nodes (as well as all edges involving them) which are not reachable from the root node. 

\item[RR3:] Delete a node $v$ if its 0- and 1-successors are identical and its low- and high-edges have the same weight $w$ (either $0$ or $1$). Meanwhile, redirect its incoming edges to \rmagenta{terminal 1 if $w=0$ and, if otherwise, to its successor.}


\item [RR4:] Merge two nodes if they have  the same index, the same 0- and 1-successors, and the same weights on the corresponding edges.
\end{itemize}

\begin{theorem}\label{thm:reduced_if_no_rule_applicable}
A normal TDD is reduced if and only if no reduction rule is applicable. 
\end{theorem}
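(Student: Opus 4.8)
The plan is to prove both directions by contraposition, matching each reduction rule RR1--RR4 to the reducedness conditions (1)--(3) of Definition~\ref{def:recuded}; the whole argument parallels Bryant's classical result for ROBDDs.

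\textbf{Reduced $\Rightarrow$ no rule applicable.} Assuming $\F$ is reduced, I would check the four rules one at a time and show that applying any of them nontrivially contradicts one of the conditions. If RR1 applies, then either a terminal $0$ node exists --- which represents the $0$ tensor, contradicting (1) --- or there are two distinct terminal $1$ nodes representing the same constant, contradicting (3). If RR2 applies, some weight-$0$ edge points to a non-terminal, contradicting (2) (unreachable nodes are excluded since we only consider TDDs in which every node is reachable from $r_\F$). If RR3 applies to $v$, then $low(v)=high(v)=:u$ with equal low/high weights $w\in\{0,1\}$, so by Eq.~\ref{eq:tdd-intp} $\bphi(v)=w\cdot(\overline{x}+x)\cdot\bphi(u)=w\cdot\bphi(u)$; if $w=0$ this contradicts (1), and if $w=1$ then the distinct nodes $v$ and $u$ share a tensor, contradicting (3). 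Finally, if RR4 applies, two distinct nodes agree in index, successors, and weights, hence represent the same tensor by Eq.~\ref{eq:tdd-intp}, again contradicting (3).

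\textbf{No rule applicable $\Rightarrow$ reduced.} I would establish the three conditions in order of increasing difficulty. First, RR1 being inapplicable together with the terminal case of Lemma~\ref{lem:normalTDDedge} forces a unique terminal node, namely terminal $1$, with no terminal $0$. Then RR2 being inapplicable means every weight-$0$ edge already points to this terminal, which is exactly condition (2). For condition (1) I would argue bottom-up: if some node represented $0$, choose one, say $v$, no proper descendant of which represents $0$; its successors are then nonzero (any terminal successor being terminal $1$), so $\bphi(v)|_{x_v=c}=w_c\cdot\bphi(v_c)=0$ via Eq.~\ref{eq:tddnode} forces both edge weights to $0$, whereupon RR2-inapplicability makes both successors the terminal and RR3 becomes applicable --- a contradiction.

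\textbf{Condition (3) --- the main obstacle.} Showing that no two distinct nodes represent the same tensor is the crux, since it is the step that genuinely needs RR3 and RR4 together and mirrors the canonicity argument of Theorem~\ref{thm:canonicity}. I would proceed by induction along $\prec$ (equivalently, on node height). An auxiliary claim is proved simultaneously: every non-terminal node carries an index essential to its tensor. Indeed, if $v$ had a non-essential index then $\bphi(v)|_{x_v=0}=\bphi(v)|_{x_v=1}$ gives, via Lemma~\ref{lem:tensor-phase} and normality, $\bphi(low(v))=\bphi(high(v))$ with equal weights, which by Lemma~\ref{lem:normalTDDedge} are both $1$; the induction hypothesis forces $low(v)=high(v)$, so RR3 applies. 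Granting essentiality, suppose $u\neq v$ with $\bphi(u)=\bphi(v)=:\psi$; both being normal, $u$ and $v$ carry the same $\prec$-least essential index $x$ of $\psi$, and the slices satisfy $\psi|_{x=c}$ equal from both sides. Using the uniqueness of normalisation (Lemma~\ref{lem:tensor-phase}) when the slice is nonzero, and conditions (1)--(2) already proven when it vanishes, the corresponding weights match and $\bphi(u_c)=\bphi(v_c)$, so the induction hypothesis gives $u_c=v_c$ for $c\in\{0,1\}$. Thus $u$ and $v$ agree in index, successors, and weights, making RR4 applicable --- the desired contradiction. The delicate point, and the place a naive proof risks circularity, is keeping the essentiality claim and the distinctness claim mutually consistent within a single induction.
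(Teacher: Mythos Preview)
Your proposal is correct. For the forward direction and for conditions (1)--(2) of the converse, your argument matches the paper's essentially verbatim. For condition (3) the paper takes a different route: rather than carrying an essentiality claim through the induction, it invokes the already-proved canonicity theorem (Theorem~\ref{thm:canonicity}) on the sub-TDDs $\F_v$ and $\F_{v'}$ rooted at two nodes with $\bphi(v)=\bphi(v')$, having first observed (via an outer induction on the depth of $\F$) that these sub-TDDs are themselves reduced. The resulting isomorphism gives equal root indices and edge weights, and then $\bphi(low(v))=\bphi(low(v'))$; a further appeal to the induction hypothesis yields $low(v)=low(v')$ (similarly for $high$), so RR4 fires. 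You instead unfold that argument, proving essentiality (the content of Lemma~\ref{lem: nodes_in_reducedTDD_are_essential}) and distinctness simultaneously in a single bottom-up induction along $\prec$. The paper's version is shorter on the page because it reuses Theorem~\ref{thm:canonicity}; yours is more self-contained and makes the induction invariant explicit, which, as you correctly flag, is exactly where a careless argument risks circularity.
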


The following theorem guarantees that the reduced TDD of a tensor can be obtained by applying the reduction rules. 
\begin{theorem}
Let $\F$ be a normal TDD representing tensor $\phi$. Applying a reduction rule to $\F$ does not change the tensor it represents. Moreover, the reduced TDD of $\phi$ can be obtained from $\F$ by applying the reduction rules till no one is applicable. 
\end{theorem}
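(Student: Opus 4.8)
The plan is to treat the two assertions separately. For the first --- that each reduction rule preserves the represented tensor --- I would run a case analysis over RR1--RR4, in each case verifying that the recursive interpretation in Eq.~\ref{eq:tdd-intp} assigns the same tensor $\bphi(p)$ to every surviving node $p$, and in particular to the root, so that $\bphi(\F)$ is unchanged. For RR4 this is immediate: two merged nodes have identical index, successors, and edge weights, so Eq.~\ref{eq:tdd-intp} gives them equal tensors, and rerouting the incoming edges of one to the other changes nothing. For RR3, if $low(v)=high(v)=u$ with common edge weight $w$, then Eq.~\ref{eq:tdd-intp} together with the identity $\overline{x}_v + x_v = 1$ (valid since $\overline{x}(c)+x(c)=(1-c)+c=1$ for $c\in\{0,1\}$) yields $\bphi(v)=w\cdot\bphi(u)$; hence redirecting a parent edge from $v$ to $u$ when $w=1$, or to terminal $1$ with a reset weight $0$ when $w=0$ (the case where $\bphi(v)=0$), leaves every parent's tensor intact. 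For RR1 and RR2 the edges being deleted or redirected carry weight $0$ or point at terminal $0$, so their contribution in Eq.~\ref{eq:tdd-intp} is $0$ before and after; deleting unreachable nodes is harmless because $\bphi$ depends only on nodes reachable from the root. These same computations show that each surviving node keeps its tensor, so normality is preserved along the way as well, a fact I will need below.

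For the second assertion I would argue in three steps. First, termination: I would attach to a normal TDD a well-founded measure --- for instance the lexicographic pair consisting of the total number of nodes and the number of edges weighted $0$ that do not point to the terminal --- and check that each rule strictly decreases it. Rules RR1, RR3, RR4 (which merge or delete a node) and the deletion case of RR2 strictly lower the first component, while the pure-redirection case of RR2 lowers the second without raising the first; since the measure cannot decrease forever, any maximal sequence of rule applications halts. Second, correctness of the halting form: by the first assertion every application preserves both the tensor and normality, so the terminal diagram $\F'$ is a normal TDD with $\bphi(\F')=\phi$, and since no rule applies to it, Theorem~\ref{thm:reduced_if_no_rule_applicable} shows $\F'$ is reduced. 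Third, canonicity: by Theorem~\ref{thm:canonicity} any two $\prec$-ordered reduced TDDs representing $\phi$ are isomorphic, so $\F'$ is, up to isomorphism, \emph{the} reduced TDD of $\phi$, which also justifies the definite article in the statement.

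The main obstacle I anticipate is the termination bookkeeping, hand in hand with the weight handling in the degenerate cases of RR1 and RR3 where a node represents the zero tensor. One must guarantee that redirecting an incoming edge to terminal $1$ is always accompanied by resetting its weight to $0$, so that a vanishing contribution stays vanishing, and that the chosen measure genuinely drops under the pure-redirection instances of RR2 that do not by themselves remove a node. Organising the applications bottom-up, from the leaves toward the root as suggested in the surrounding text, would make both points transparent: each node is finalised exactly once, and all zero-weight redirections beneath a node are completed before that node is processed.
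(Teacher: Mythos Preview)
Your proposal is correct and follows essentially the same route as the paper's proof: a case check that RR1--RR4 preserve $\bphi(\F)$, then an appeal to Theorem~\ref{thm:reduced_if_no_rule_applicable} and Theorem~\ref{thm:canonicity}. The paper is terser (it calls the first part ``routine'' and replaces your explicit termination measure by the phrase ``applying the reduction rules in a bottom-up manner until no rule is applicable''), so your added detail on termination and on the zero-weight edge bookkeeping is welcome elaboration rather than a different argument.
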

\begin{proof}
It is routine to show that applying any reduction rule to a normal TDD does not change the tensor it represents. Suppose $\F$ is a normal TDD that is not reduced. Applying the reduction rules in a bottom-up manner until no rule is applicable, by Theorem~\ref{thm:reduced_if_no_rule_applicable}, we obtain a reduced TDD that also represents $\phi=\bphi(\F)$.  As reduced TDDs are unique \rmagenta{(see Theorem~\ref{thm:canonicity})}, this gives the reduced TDD of $\phi$.
\end{proof}

As each application of a reduction rule removes some nodes, the reduced TDD has the minimal number of nodes.
\begin{corollary}
Let $\F$ be a normal TDD of a tensor $\phi$. Then $\F$ is reduced if and only if $|\F| \leq |\g|$ for any other normal TDD $\g$ of $\phi$.
\end{corollary}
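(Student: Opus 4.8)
The plan is to derive the corollary from two results already in hand: the canonicity theorem (Theorem~\ref{thm:canonicity}), which says that the reduced TDD of $\phi$ is unique up to isomorphism, and the preceding theorem, which says that applying the reduction rules to any normal TDD of $\phi$ terminates in this reduced TDD without ever increasing the node count. Throughout I would write $\F^*$ for the reduced TDD of $\phi$; by canonicity it is well defined up to $\eqsim$, and since RR1--RR4 only delete or merge nodes, reducing any normal TDD $\G$ of $\phi$ yields $\F^*$ with $size(\F^*)\le size(\G)$.

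For the forward implication I would assume $\F$ is reduced and fix an arbitrary normal TDD $\G$ of $\phi$. Reducing $\G$ produces a reduced TDD of $\phi$, which by canonicity is isomorphic to $\F$; hence $size(\F)=size(\F^*)$. As the reduction of $\G$ only removes or merges nodes, $size(\F^*)\le size(\G)$, so $size(\F)\le size(\G)$, which is precisely the minimality asserted.

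For the converse I would argue by contraposition: assume $\F$ is normal but not reduced. By Theorem~\ref{thm:reduced_if_no_rule_applicable} some reduction rule is applicable, and reducing $\F$ to completion yields $\F^*$, a normal TDD of the same tensor $\phi$. If I can establish $size(\F^*)<size(\F)$, then $\F$ fails the minimality condition, so minimality forces reducedness, completing the equivalence.

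The crux, and the step I expect to be the main obstacle, is exactly this strict inequality, because $size$ counts only non-terminal nodes. Rules RR3 and RR4 each delete one non-terminal node, and RR2 deletes non-terminal nodes that become unreachable, so these strictly lower $size$; the delicate cases are RR1 merging duplicated terminal $1$ nodes and RR2 redirecting a weight-$0$ edge to a node that remains reachable through other edges, neither of which need change the non-terminal count. I would dispose of these by invoking the standing convention that the terminal $1$ is unique, and by a case analysis in the spirit of the proof of Theorem~\ref{thm:reduced_if_no_rule_applicable}: a normal $\F$ that is not reduced must either contain a non-terminal node $v$ with $\bphi(v)=0$ (whose two out-edges are then forced to weight $0$, activating RR2 or RR3) or contain two distinct non-terminal nodes $u\ne v$ with $\bphi(u)=\bphi(v)$ (activating RR4). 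In either situation a rule that strictly decreases the non-terminal count applies, so $size(\F^*)<size(\F)$, which yields the desired contradiction and closes the argument.
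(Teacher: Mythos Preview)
Your forward direction is fine and essentially matches the paper's one-line justification. The converse, however, has a genuine gap: the dichotomy you propose, that a normal non-reduced $\F$ must contain either a non-terminal $v$ with $\bphi(v)=0$ or two distinct non-terminals representing the same tensor, is not exhaustive. Definition~\ref{def:recuded} can fail purely at the level of terminal nodes or zero-weight edges without either of your two cases holding, and there is no ``standing convention'' in the paper forcing a normal TDD to have a unique terminal; Lemma~\ref{lem:normalTDDedge} explicitly allows terminal $0$ nodes. Concretely, take $\phi=\overline{x}$ over $I=\{x\}$ and let $\F$ have one non-terminal root $r$ with $low(r)$ a terminal $1$ node (edge weight $1$) and $high(r)$ a separate terminal $0$ node (edge weight $1$). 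Then $\F$ is normal and not reduced (the terminal $0$ violates condition~1), yet $size(\F)=1$, which equals the size of the reduced TDD $\F^*$. The strict inequality $size(\F^*)<size(\F)$ you need simply fails here. An analogous example can be built for condition~2: a $0$-weight edge pointing to a non-terminal node that remains reachable through a second edge.

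The underlying issue is that $size$ counts only non-terminal nodes, whereas RR1 and some instances of RR2 affect only terminals or merely redirect edges without deleting any non-terminal. The paper's own one-sentence justification (``each application of a reduction rule removes some nodes'') glosses over exactly this distinction, so you are in good company; but the argument as written does not close, and indeed the ``only if'' direction appears to fail without an extra hypothesis (e.g.\ restricting to normal TDDs with a single terminal and no terminal $0$) or a coarser size measure counting all nodes.
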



\begin{figure}
    \centering
    \subfigure[]{
     \centering
    \includegraphics[width=0.28\textwidth]{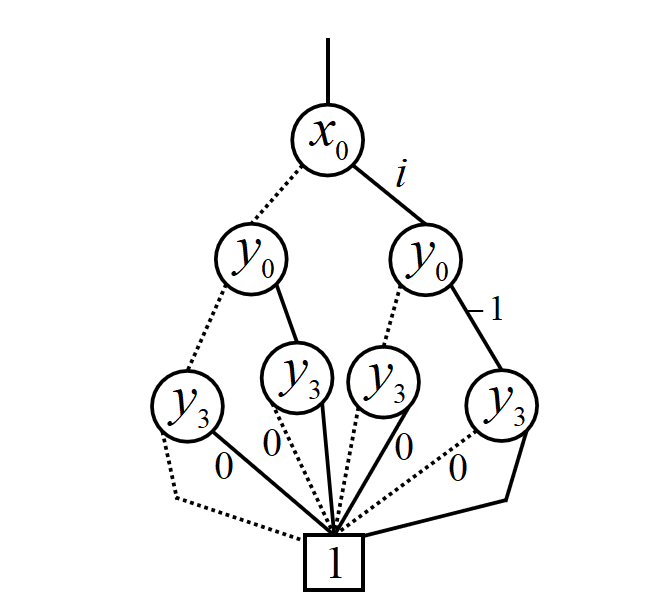}
    }
    \subfigure[]{
    \centering
    \includegraphics[width=0.28\textwidth]{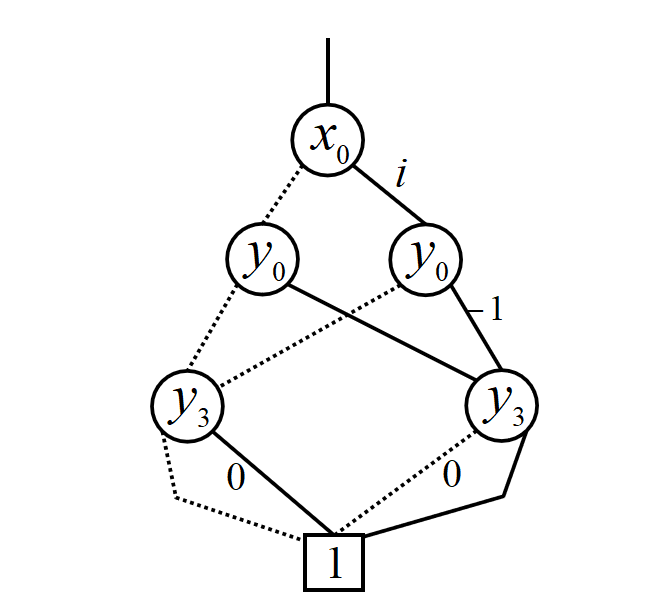}
    }
    \caption{Reduction of the normalised TDD shown in Fig.~\ref{run_exp_normal}(c), where nodes that represent the same tensor (the first and the third $y_3$ nodes, the second and the fourth $y_3$ nodes) in (a) are merged.
    }
    \label{run_exp_reduce}
\end{figure}

\begin{example}\label{ex:reduction}


\rblue{Consider the normalised TDD shown in Fig.~\ref{run_exp_normal}(c). Applying RR1 to merge all terminal 1 nodes and delete all terminal 0 nodes gives the TDD shown in Fig.~\ref{run_exp_reduce}(a). Then, further applying RR4 to merge the first and the third as well as the second and the fourth $y_3$ nodes, we have the reduced TDD as shown in Fig.~\ref{run_exp_reduce}(b)\rmagenta{, which provides a compact representation for the circuit in Fig.~\ref{exp-for-quantum-circuit}}.}

\end{example}

\begin{remark}\label{remark:robdd}
\rmagenta{As Boolean functions are special tensors, each Boolean function also has a unique reduced TDD representation, which can be obtained by  performing the reduction rule RR1 on its ROBDD representation if we assign weight 1 to each ROBDD edge. }
\end{remark}

\subsection{\xblue{Comparison with QMDD}}\label{sec:qmdd_vs._tdd}

As decision diagrams, TDDs are closely related to QMDDs. In this subsection, we provide a detailed investigation in their connection and difference. 

First of all, we observe that QMDDs can be transformed into TDDs in a natural and flexible way. Indeed, 
suppose $\M$ is the QMDD representation of a $2^n\times 2^n$ matrix $A$ w.r.t. variable order $q_1\prec q_2 \prec \cdots \prec q_n$. Recall that each QMDD variable $q_i$ corresponds to two TDD variables, i.e.,  an input index $x_i$ and an output index $y_i$. We say an order $\prec'$ on $I:=\{x_i,y_i\mid 1\leq i\leq n\}$ respects $\prec$ if \blue{(i)} for each $1\leq i\leq n$,  $x_i$ is either the $\prec'$-predecessor or the $\prec'$-successor of $y_i$, \blue{and (ii) for any $1\leq j<i\leq n$, $x_i$ and $y_i$ are $\prec'$-descendants of both $x_j$ and $y_j$.}


\begin{theorem}
Suppose $\M$ is the QMDD representation of a $2^n\times 2^n$ matrix $A$ w.r.t. variable order $q_1\prec q_2 \prec \cdots \prec q_n$. Let $x_i$ and $y_i$ be the input and output indices of $q_i$ for each $1\leq i\leq n$ and set $I:=\{x_i,y_i\mid 1\leq i\leq n\}$. Let $\prec'$ be any order on $I$ that respects $\prec$. Then the TDD representation of $A$ w.r.t. $\prec'$ has $|\M|$ to $3|\M|$ non-terminal nodes, where $|\M|$ is the number of non-terminal nodes of $\M$.
\end{theorem}
\begin{proof}[Proof (Sketch).]

We prove this by using induction on $n$, the number of variables in $\M$. 

We start with $n=1$. Let $A=\begin{bmatrix}a & b \\ c & d\end{bmatrix}$ be an arbitrary $2\times 2$ matrix with variable $q_1$, and input/output indices $x_1$ and $y_1$. From Table~\ref{tab:qmdd_vs_tdd}, it is easy to see that, if $A$ is non-trivial (i.e., not all entries are identical), then the QMDD has only one non-terminal node, while 
the TDD w.r.t. $x_1\prec y_1$ has $1\leq \ell \leq 3$ non-terminal nodes. This conclusion also applies to the TDD w.r.t. $y_1\prec x_1$, which may be different from the TDD w.r.t. $x_1\prec y_1$. 

In general, let $A=\begin{bmatrix}A_{00} & A_{01} \\ A_{10} & A_{11}\end{bmatrix}$ be a $2^s\times 2^s$ matrix, where $A_{ij}$ are $2^{s-1}\times 2^{s-1}$ matrices for $0\leq i,j\leq 1$. W.l.o.g., we assume that these $A_{ij}$ are not the same. Suppose $\M_{ij}$ and $\cX{T}_{ij}$ are, respectively, the QMDD and TDD representation for $A_{ij}$, w.r.t. the corresponding restricted order. Then from the induction hypothesis, we know
 $$|\M_{ij}|\leq |\cX{T}_{ij}| \leq 3|\M_{ij}|.$$ 
Furthermore, for each non-terminal node $nd$ in $\M_{ij}$, there exists an non-terminal node $nd'$ in $\cX{T}_{ij}$ such that the QMDD rooted at $nd$, denoted by $\M_{nd}$, represents the same matrix as the TDD rooted at $nd'$, denoted by $\cX{T}_{nd'}$. Clearly, the above inequality also holds for $\M_{nd}$ and $\cX{T}_{nd'}$. While $\M_{ij}$ may share nodes with some other $\M_{i'j'}$, it can be further proved (omitted here) that 
\begin{align}\label{eq:T3M}
\bigg|\bigcup_{0\leq i,j\leq 1}{nodes(\M_{ij})}\bigg| \leq \bigg|\bigcup_{0\leq i,j\leq 1}{nodes(\cX{T}_{ij})}\bigg| \leq 3 \bigg|\bigcup_{0\leq i,j\leq 1}{nodes(\M_{ij})}\bigg|,
\end{align}
where, for a QMDD or TDD $\cX{X}$, $nodes(\cX{X})$ denotes  the set of non-terminal nodes in $\cX{X}$.

Suppose the variable order taken by the QMDD representation $\M$ of $A$ is $q_1\prec q_2\prec \cdots\prec q_n$. Then $q_1$ is the root node of $\M$ and it has at most four child nodes, which are the root nodes of $\M_{ij}$ for  $0\leq i,j\leq 1$. Thus, 
$|\M| = 1 + |\bigcup_{0\leq i,j\leq 1}{nodes(\M_{ij})}|$.
Let $\prec'$ be an order on $I$ which respects $\prec$. Then $x_1$ and $y_1$ must be the two least indices in the order. Let us assume $x_1\prec' y_1$.  We construct a TDD representation $\cX{T}$ for $A$ w.r.t. $\prec'$. Let $x_1$ be the root of $\cX{T}$, and label with $y_1$ the two child nodes of $x_1$. The left (right, resp.) $y_1$ node has two child nodes which are, respectively, the root nodes of $\cX{T}_{00}$ and $\cX{T}_{01}$ ($\cX{T}_{10}$ and $\cX{T}_{11}$, resp.). Assume that the edges from these $y_1$ nodes are properly attached with the weights of the corresponding $\cX{T}_{ij}$. Then $\cX{T}$ is a (possibly non-reduced) TDD representation of $A$ and we have $|\cX{T}| = 3 +  |\bigcup_{0\leq i,j\leq 1}{nodes(\cX{T}_{ij})}|$. Note that here $\cX{T}$ may be further reduced by merging root nodes of $\cX{T}_{ij}$ and $\cX{T}_{i'j'}$ if they are identical, but all non-terminal nodes below them are non-redundant. Also write $\cX{T}$ for the reduced TDD. Then $\cX{T}$ has one to three more non-terminal nodes than $\bigcup_{0\leq i,j\leq 1}{nodes(\cX{T}_{ij})}$. Thus, by Eq.~\ref{eq:T3M} and $|\M| = 1 + |\bigcup_{0\leq i,j\leq 1}{nodes(\M_{ij})}|$, we have  $|\M| \leq |\cX{T}| \leq 3|\M|$.
\end{proof}

\begin{table}[]
\centering
\caption{Representation of $2\times 2$ matrix: QMDD vs. TDD, where $a,b,c,d$ are different complex values other than 0, $k$ is a nonzero complex value. The rows show, respectively, the matrix $A$, the QMDD representation $\M$, the TDD representation $\cX{T}_1$ with $x_1\prec y_1$, the TDD representation $\cX{T}_2$ with $y_1\prec x_1$, where for illustrative purpose we take $a,b,c,d,k$ as 1,0.8,0.7,0.5,0.5, respectively. Note that, except those cases when 0 is an entry in $A$, the cases when $A_{10}=A_{11}$
are omitted, as they are symmetrical to the cases when $A_{00}=A_{01}$.
} \label{tab:qmdd_vs_tdd}
\scalebox{0.85}{
\begin{tabular}{c|c|c|c|c|c|c|c}
    $\begin{bmatrix}a & a \\ a & a\end{bmatrix}$
    &
    $\begin{bmatrix}a & a \\ b & b\end{bmatrix}$
    &
    $\begin{bmatrix}a & b \\ a & b\end{bmatrix}$
    &
    $\begin{bmatrix}a & ka \\ b & kb\end{bmatrix}$
    &
    $\begin{bmatrix}a & b \\ ka & kb\end{bmatrix}$
    &
    $\begin{bmatrix}a & a \\ b & c\end{bmatrix}$
    &
    $\begin{bmatrix}a & b \\ a & c\end{bmatrix}$
    &
    $\begin{bmatrix}a & b \\ c & d\end{bmatrix}$
    \\
    $\raisebox{-.5\height}{\includegraphics[width=0.08\linewidth]{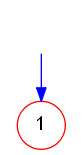}}$
    &$\raisebox{-.5\height}{\includegraphics[width=0.12\linewidth]{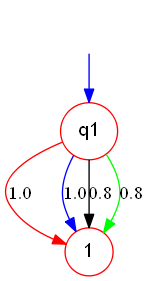}}$    
    &$\raisebox{-.5\height}{\includegraphics[width=0.12\linewidth]{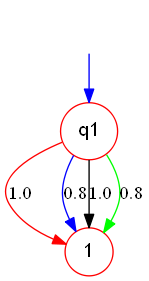}}$
    &
    $\raisebox{-.5\height}{\includegraphics[width=0.12\linewidth]{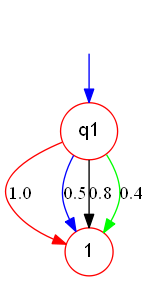}}$    
    &
    $\raisebox{-.5\height}{\includegraphics[width=0.12\linewidth]{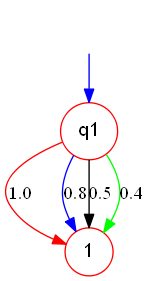}}$ 
    &
    $\raisebox{-.5\height}{\includegraphics[width=0.12\linewidth]{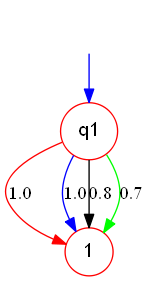}}$
    &
    $\raisebox{-.5\height}{\includegraphics[width=0.12\linewidth]{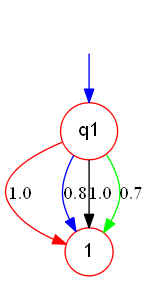}}$
    & 
    $\raisebox{-.5\height}{\includegraphics[width=0.12\linewidth]{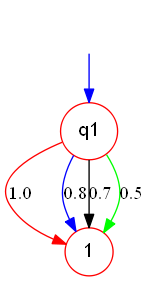}}$   
    \\
    $\raisebox{-.5\height}{\includegraphics[width=0.08\linewidth]{figure_m/1.png}}$   
    &
    $\raisebox{-.5\height}{\includegraphics[width=0.07\linewidth]{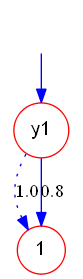}}$    
    &
    $\raisebox{-.5\height}{\includegraphics[width=0.07\linewidth]{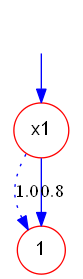}}$
    &
    $\raisebox{-.5\height}{\includegraphics[width=0.07\linewidth]{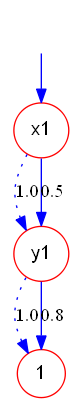}}$
    &
    $\raisebox{-.5\height}{\includegraphics[width=0.07\linewidth]{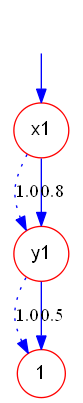}}$   
    &
    $\raisebox{-.5\height}{\includegraphics[width=0.135\linewidth]{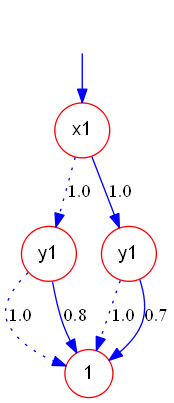}}$   
    &
    $\raisebox{-.5\height}{\includegraphics[width=0.105\linewidth]{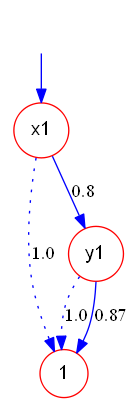}}$
    &
    $\raisebox{-.5\height}{\includegraphics[width=0.14\linewidth]{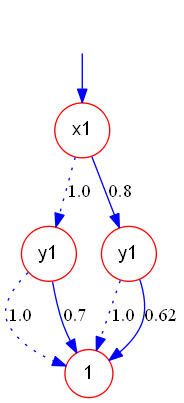}}$ 
    \\
    $\raisebox{-.5\height}{\includegraphics[width=0.08\linewidth]{figure_m/1.png}}$  
    &
    $\raisebox{-.5\height}{\includegraphics[width=0.07\linewidth]{figure_m/2_2.png}}$
    &
    $\raisebox{-.5\height}{\includegraphics[width=0.07\linewidth]{figure_m/3_2.png}}$
    &
    $\raisebox{-.5\height}{\includegraphics[width=0.07\linewidth]{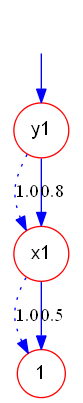}}$
    &
    $\raisebox{-.5\height}{\includegraphics[width=0.07\linewidth]{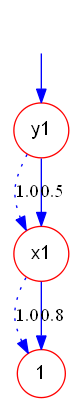}}$
    &
    $\raisebox{-.5\height}{\includegraphics[width=0.105\linewidth]{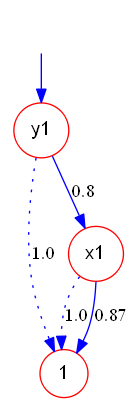}}$
    &
    $\raisebox{-.5\height}{\includegraphics[width=0.135\linewidth]{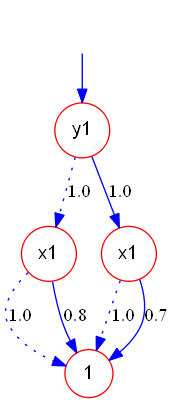}}$
    &
    $\raisebox{-.5\height}{\includegraphics[width=0.14\linewidth]{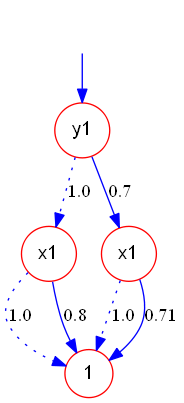}}$
\end{tabular}
}
\end{table}




From the above proof we can see that each non-terminal node in the QMDD representation is transformed into 1 to 3 non-terminal nodes in the TDD representation w.r.t. to any index order that respects the variable order. The proof indeed specifies a linear-time procedure for transforming a given QMDD to a TDD. 
It is worth noting that not all TDDs can be derived in this way and TDD can be more flexible as it is defined for arbitrary tensors (i.e., not necessarily $2^n\times 2^n$ matrices) and with respect to arbitrary index orders.

\begin{figure}
\centering
\subfigure[]{
\centering
\begin{tikzpicture}
\node at(0,0) [anchor=north]{$\begin{bmatrix}1 & 0 &0 &0 \\ 0 & 0 &1 &0 \\ 0&1&0&0 \\0&0&0&1\end{bmatrix}$};
\end{tikzpicture}
}
\subfigure[]{
\centering
\begin{tikzpicture}
\node at (0,0) [anchor=north]{
\begin{tikzcd}[column sep=0.8cm,row sep=0.3cm]
\lstick{$q_1:\ket{i}$}  &\swap{2} &\qw\rstick{$\ket{j}$}& &\lstick{$\ket{i}$} &\qw&\qw\rstick{$\ket{i}$}        \\
                 & & &=      \\
\lstick{$q_2:\ket{j}$}   &\targX{} &\qw\rstick{$\ket{i}$}& &\lstick{$\ket{j}$}&\qw&\qw\rstick{$\ket{j}$}   \\
\end{tikzcd}};
\node at(-2,0.1) [anchor=north]{$x_{1}$};
\node at(-0.8,0.1) [anchor=north]{$y_{1}$};
\node at(-2,-1) [anchor=north]{$x_{2}$};
\node at(-0.8,-1) [anchor=north]{$y_{2}$};
\node at(1.5,0.1) [anchor=north]{$x_{1}$};
\node at(2.7,0.1) [anchor=north]{$y_{2}$};
\node at(1.5,-1) [anchor=north]{$x_{2}$};
\node at(2.7,-1) [anchor=north]{$y_{1}$};
\end{tikzpicture}
}
\caption{The SWAP gate: (a) matrix representation, and (b) two equivalent tensor network representations. 
}
\label{fig:swap-tensor}
\end{figure}

\begin{figure}
    \centering
    \subfigure[]{
    \centering
    \includegraphics[width=0.25\textwidth]{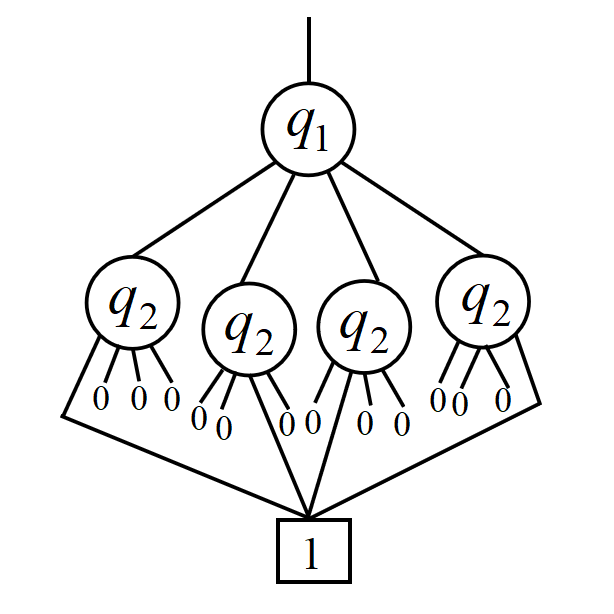}
    }
    \subfigure[]{
    \centering
    \includegraphics[width=0.2\textwidth]{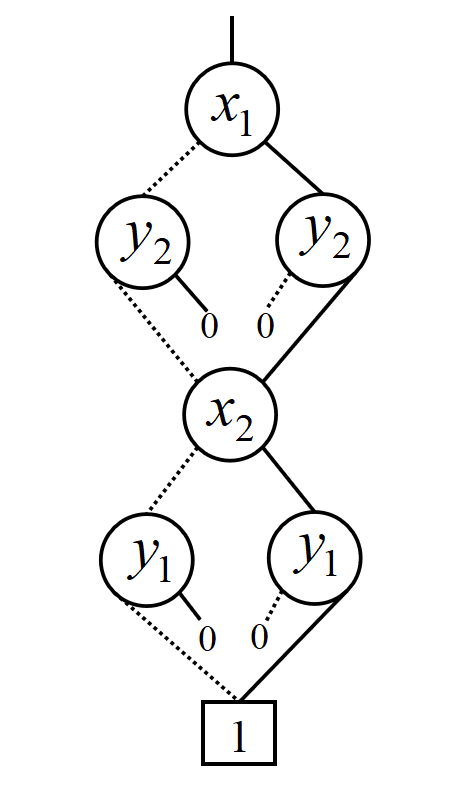}
    }    
    \caption{Decision diagrams for the SWAP gate: (a) QMDD w.r.t. $q_1\prec q_2$, (b) TDD w.r.t $x_1\prec' y_2 \prec' x_2 \prec' y_1$.}
    \label{fig:swap_dd}
\end{figure}

\begin{example}\label{ex:SWAP}
Consider the SWAP gate between qubits $q_1$ and $q_2$  (see Fig.~\ref{fig:swap-tensor}), which swaps the state on $q_1$ with that on $q_2$ (i.e.,   $SWAP\ket{i}\ket{j}=\ket{j}\ket{i}$ for any $i,j\in\{0,1\}$). The QMDD representation w.r.t. order $q_1\prec q_2$ of the SWAP gate has 6 nodes  (cf.  Fig.~\ref{fig:swap_dd}(a)), and the TDD representation w.r.t. any order that respects $q_1\prec q_2$ has 10 nodes. However, let $x_i$ and $y_i$ be the input and output indices corresponding to $q_i$ for $i=1,2$. By putting $x_1$ next to $y_2$ and $x_2$ next to $y_1$, we have a TDD representation with only 7 nodes (cf. Fig.~\ref{fig:swap_dd}(b)).
\end{example}


The above discussion shows that, given the QMDD representation $\M$ of a $2^n\times 2^n$ matrix $A$, we can construct from $\M$,  in linear time, a TDD representation of $A$ with size $\leq 3|\M|$. This suggests that TDD, as a data structure, is universal and as compact as QMDD. When the task is to represent the functionality of a quantum circuit, however, TDD could be more flexible as the ordering of non-terminal indices in the circuit need not to respect the qubit order. This can be compared with QMDD: once the qubit order has been selected, usually we represent the unitary matrix of each gate in the circuit w.r.t. the selected order. By exploiting the flexibility of ordering among non-terminal indices,  sometimes we can have an exponential reduction in the size of the decision diagrams generated during the computing process.

To describe such an example, we need the following simple observation.
\begin{example}\label{ex:C_sigma}
Let $C_{id}$ be the SWAP circuit with $2n$ qubits in $Q:=\{q_1,\ldots,q_{2n}\}$ which swaps $q_k$ and $q_{n+k}$ for $1\leq k \leq n$. Then its QMDD w.r.t. qubit order $q_1\prec q_2\prec \cdots\prec q_{2n}$ has $\frac{1}{3}(5\cdot 4^{n}-2)$ nodes.\footnote{This is because different values of the first $n$ qubits have different successor values on the last $n$ qubits.} In general, for any permutation $\sigma$ on $Q$, let $C_{\sigma}$ be the quantum circuit which swaps $q_{\sigma(k)}$ and $q_{\sigma(n+k)}$ for $1\leq k \leq n$. We also  need $\frac{1}{3}(5\cdot 4^{n}-2)$ nodes to represent $C_{\sigma}$ as a QMDD w.r.t. qubit order $q_{\sigma(1)}\prec q_{\sigma(2)}\prec \cdots \prec q_{\sigma(2n)}$.



On the other hand, let $x_k$ and $y_k$ be the input and output indices corresponding to $q_k$. The TDD w.r.t. index order $x_{\sigma(1)}\prec' y_{\sigma(n+1)} \prec' \cdots \prec' x_{\sigma(n)} \prec' y_{\sigma(2n)} \prec' \cdots \prec' x_{\sigma(2n)} \prec' y_{\sigma(n)}$ has size $6n+1$. 

As a simple experiment, we construct the QMDD and TDD for the SWAP circuit $C_{id}$ on $Q:=\set{q_1,\ldots,q_{2n}}$. The results for $1\leq n\leq 10$ are shown in Fig.~\ref{fig:swap_pic}, where the qubit order for QMDD is $q_1 \prec q_2 \prec \cdots \prec q_{2n}$ and the index order for TDD is $x_1 \prec y_{n+1} \prec x_2 \prec y_{n+2} \prec \cdots \prec x_{n+1} \prec y_{1}\prec \cdots \prec x_{2n} \prec y_{n}$.

\begin{figure}
    \centering
    \includegraphics[width=0.55\textwidth]{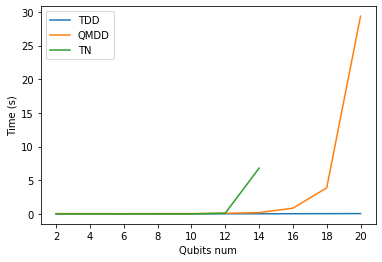}
    \caption{The time consumption for constructing the functionality of the SWAP circuit $C_{id}$ on $2n$ qubits. 
    }
    \label{fig:swap_pic}
\end{figure}
\end{example}


To be fair, if we select the qubit order $q_{\sigma(1)} \prec q_{\sigma(n+1)} \prec q_{\sigma(2)} \prec q_{\sigma(n+2)} \prec \cdots \prec q_{\sigma(n)} \prec q_{\sigma(2n)}$, the above circuit $C_\sigma$ also has a QMDD representation with linear size. The following example shows that, however, there is a circuit $C$ such that, no matter what qubit order is selected, at some step, the QMDD representation of the circuit will have size that is exponential in $n$ if we compute the functionality of the circuit gate by gate according to its topological order.



\begin{example}
We first observe that any SWAP circuit on $Q:=\{q_1,\ldots,q_{2n}\}$, i.e., a circuit consisting only of SWAP gates, is equivalent to $C_\sigma$ for some permutation $\sigma$ on $Q$. As the number of permutations on $Q$ is finite, we can construct a SWAP circuit $C=(g_1,\ldots,g_\ell)$ which satisfies the following condition: for any permutation $\sigma$, there is some $1\leq i\leq \ell$ such that  $C_\sigma \equiv C_i:=(g_1,\ldots,g_i)$. 

Note that each qubit order $\prec$ induces a permutation $\sigma_\prec$ such that  $q_{\sigma_\prec(1)}\prec q_{\sigma_\prec(2)} \prec \cdots\prec q_{\sigma_\prec(2n)}$.
As a consequence, for any qubit order $\prec$, there exists some $1\leq i\leq \ell$ such that $ C_{\sigma_\prec}\equiv C_i$. By the analysis given in Example~\ref{ex:C_sigma}, at this step, the QMDD of $C_i$ has size that is exponential in $n$.

%
Such an example for $n=2$ is illustrated in  Fig.~\ref{exp-for-swap-circuit}.
Note that, after the first two SWAP gates, $q_1$ and $q_3$ are swapped with $q_2$ and $q_4$, respectively; after the first four SWAP gates, $q_1$ and $q_2$ are swapped with $q_3$ and  $q_4$, respectively; in the end of the circuit, $q_1$ and $q_3$ are swapped with $q_4$ and $q_3$, respectively.
%




On the other hand, let us select an index order for TDD as follows: for each SWAP gate $g$ between $q_i$ and $q_j$, let $x_i$, $x_j$ be the input indices of $q_i$ and $q_j$ at $g$, and $y_i$, $y_j$ the output indices of $q_i$ and $q_j$ at $g$ (cf. Fig.~\ref{fig:swap-tensor}(b) for the case when $g$ swaps $q_1$ and $q_2$). Requiring that $x_i$ is adjacent to $y_j$ and $x_j$ is adjacent to $y_i$, it can be proved that the TDD w.r.t. any such an order can be constructed in a way such that any TDD generated in the process has at most $6n+1$ non-terminal nodes. This is because the values of $x_i$ and $y_j$ are always equal, and the same applies to $x_j$ and $y_i$. Thus, it behaves like the tensor product of two identity gates if we see $x_i$ and $y_j$ as the input and output indices of one qubit and $x_j$, $y_i$ of the other (cf. Fig.~\ref{fig:swap-tensor}(b)), which can be represented as a TDD with $2\times 3+1$ nodes. When more SWAP gates are involved, it can also be regarded as a series of tensor products of identity matrices, taking only $6n+1$ nodes.

Consider the circuit given in Fig.~\ref{exp-for-swap-circuit} again.
For this circuit, no matter what qubit order is used, it generates, at some point of the computation, a QMDD with $\frac{1}{3}(5\cdot 4^{2}-2)=26$ nodes.
However, if we set the index order as \[x_1 \prec x_{2,1}\prec x_{3,2} \prec y_4 \prec x_{2} \prec x_{1,1} \prec x_{4,2} \prec y_3 \prec x_3 \prec x_{4,1} \prec x_{1,2} \prec y_2 \prec x_{4} \prec x_{3,1} \prec x_{2,2} \prec y_1,\] 
all TDDs generated in the process have at most 13 (which is $6n+1$ for $n=2$) nodes. It is worth noting that, when $n=10$, the two numbers are 1,747,626 vs. 61.

\end{example}


\begin{figure}
\centerline{
\begin{tikzpicture}
\node at (0,0) [anchor=north]{
\begin{tikzcd}[column sep=0.8cm,row sep=0.4cm]
\lstick{$q_1$}   &\swap{1} &\qw      &\swap{3}  &\swap{1}      &\qw\\
\lstick{$q_2$}   &\targX{} &\swap{1} &\qw       &\targX{}      &\qw\\
\lstick{$q_3$}   &\swap{1} &\targX{} &\qw       &\swap{1}      &\qw\\
\lstick{$q_4$}   &\targX{} &\qw      &\targX{}  &\targX{}      &\qw\\
\end{tikzcd}};
\node at(-2,0.1) [anchor=north]{$x_{1}$};
\node at(-0.8,0.1) [anchor=north]{$x_{1,1}$};
\node at(1.4,0.1) [anchor=north]{$x_{1,2}$};
\node at(2.6,0.1) [anchor=north]{$y_{1}$};
\node at(-2,-0.7) [anchor=north]{$x_{2}$};
\node at(-0.8,-0.7) [anchor=north]{$x_{2,1}$};
\node at(0.3,-0.7) [anchor=north]{$x_{2,2}$};
\node at(2.6,-0.7) [anchor=north]{$y_{2}$};
\node at(-2,-1.4) [anchor=north]{$x_{3}$};
\node at(-0.8,-1.4) [anchor=north]{$x_{3,1}$};
\node at(0.3,-1.4) [anchor=north]{$x_{3,2}$};
\node at(2.6,-1.4) [anchor=north]{$y_{3}$};
\node at(-2,-2.2) [anchor=north]{$x_{4}$};
\node at(-0.8,-2.2) [anchor=north]{$x_{4,1}$};
\node at(1.4,-2.2) [anchor=north]{$x_{4,2}$};
\node at(2.6,-2.2) [anchor=north]{$y_{4}$};
\end{tikzpicture}
}
\caption{A SWAP circuit over four qubits.}
\label{exp-for-swap-circuit}
\end{figure}


While the above circuit construction seems  artificial, SWAP gates are frequently used in practice. For example, they are used at the end of QFT circuits to reverse the order of qubits. They are also used in tasks like quantum circuit transformation (see, e.g., \cite{MaslovFM08,LiZF21}) in order to bring qubits close to each other so that desired 2-qubit operations can be executed on a near-term quantum device like Google's Sycamore. In Section~\ref{sec:circuitpartition} we will further demonstrate the flexibility of TDD calculations by employing circuit partition schemes.

\section{Algorithms} \label{sec:construction}


This section is devoted to algorithms for constructing the corresponding reduced TDD from a given tensor and key operations such as addition and contraction of TDDs. All of these algorithms are implemented in a recursive manner. Every time a new node is generated, we apply normalisation and reduction rules locally to this node, implemented by calling the $reduce$ procedure. In this way, it can be guaranteed that the TDDs obtained are all reduced. It is also worth noting that motivated by~\cite{brace1990efficient}, to avoid redundancy, in our real implementation (not shown in the algorithms) all the nodes are stored in a hash table. Whenever a new node is about to be generated, we first check if such a node (with the same index, successors and weights on the corresponding edges) already exists in the table. If yes, the node is returned directly; otherwise, a new one is created and added into the hash table. 

\subsection{Generation}

 Algorithm~\ref{TDD_generate} shows the process of generating the reduced TDD of a tensor. 
 The time complexity of the construction  is linear in $|V|$, the number of nodes in the constructed TDD.
 

\begin{algorithm}
\caption{$\TDD\_generate(\phi)$}
\begin{algorithmic}[1]
\Require{A tensor $\phi$ over a linearly ordered index set $I$.}
\Ensure{The reduced TDD of $\phi$.}

\If{$\phi \equiv c$ is a constant}
\State \Return the trivial TDD with weight $c$
\EndIf
\State  $x\leftarrow$ the smallest index of $\phi$
\State $tdd\leftarrow$ an empty TDD
\State $tdd.root \leftarrow$ a new node $v$  with index $x$
\State $v.low\leftarrow \TDD\_generate(\phi|_{x=0})$
\State $v.high\leftarrow \TDD\_generate(\phi|_{x=1})$
\State $tdd.weight\leftarrow 1$
\State \Return $reduce(tdd)$
\end{algorithmic}
\label{TDD_generate}
\end{algorithm}

We emphasise that, if an index is repeated in the tensor, for example $\phi_{xxy}$, then the two successors of the node representing $\phi_{xxy}$  will be $\phi_{00y}$ and $\phi_{11y}$. In other words, we construct the TDD as if it is the tensor $\phi_{xy}$. When tensor operations are concerned, however, both $x$ indices will be involved.

\begin{figure}
    \centering
    \subfigure[]{
     \centering
    \includegraphics[width=0.27\textwidth]{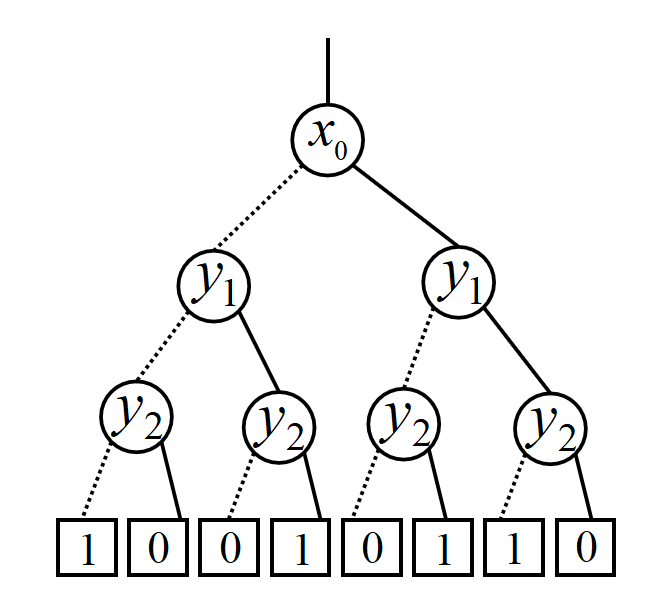}
    }
    \subfigure[]{
    \centering
    \includegraphics[width=0.275\textwidth]{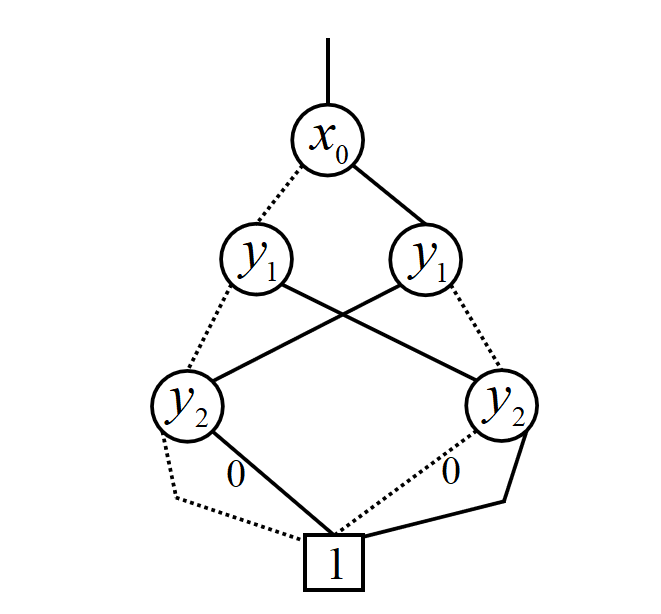}
    }
    \caption{Two TDDs of the \cnot\ gate with indices $x_0,y_1,y_2$: (a) the general form before normalisation and reduction; (b) the reduced TDD, where nodes are normalised and those represent the same tensor are merged. 
    }
    \label{exp-for-TDD2}
\end{figure}

\begin{example}
Consider the \cnot\ gate shown in Fig.~\ref{exp-for-hyper-edge} (right), which is represented by a tensor $\phi_{x_0x_0y_1y_2}$. The  reduced TDD of  $\phi_{x_0x_0y_1y_2}$ is shown in Fig.~\ref{exp-for-TDD2}(b), where the index $x_0$ only appears once with the two successors representing the tensor $\phi_{00y_1y_2}$ and $\phi_{11y_1y_2}$. 
\end{example}


\subsection{Addition}
Let $\F$ and $\g$ be two reduced TDDs over index set $I$. The summation of $\F$ and $\g$, denoted $\F+\g$, is a reduced TDD with the corresponding tensor $\bphi(\F)+\bphi(\g)$. For any $x\in I$ with $x\preceq index(r_\F)$ and $x\preceq index(r_\g)$, by the TDD version of the Boole-Shannon expansion (cf. Eq.~\ref{eq:tdd-shannon}), we have  
\begin{align*}
\bphi(\F)+ \bphi(\g) = &\  \overline{x} \cdot (\bphi(\F_{x=0})+ \bphi(\g_{x=0})) + x \cdot (\bphi(\F_{x=1}) + \bphi(\g_{x=1})).
\end{align*}
Recall here $\F_{x=c}$ (resp. $\g_{x=c}$) is the sub-TDD as defined in Definition~\ref{def:subtdd} for $c\in \{0,1\}$.

Motivated by this observation,  Algorithm~\ref{TDD_Add} implements the $Add$ operation for TDDs, in a node-wise manner. \xblue{For two TDDs $\F$ and $\g$ as above, the size of $\F+\g$ is related to the number of paths in $\F$ and $\g$.
}


\begin{algorithm}
\caption{$Add(\F, \g)$}
\begin{algorithmic}[1]
\Require{
Two reduced TDDs $\F$ and $\g$}.
\Ensure {The reduced TDD of $\bphi(\F)+\bphi(\g)$}.
\If{$r_\F= r_\g$}
\State $tdd\leftarrow \F$
\State $tdd.weight\leftarrow w_\F\ + \ w_\g$
\State \Return $tdd$
\EndIf
\State $x\leftarrow$ the smaller index of $r_\F$ and $r_\g$
\State $tdd\leftarrow$ an empty TDD
\State $tdd.root \leftarrow$ a new node $v$  with index $x$
\State $v.low\leftarrow Add(\F_{x=0},\g_{x=0})$
\State $v.high\leftarrow Add(\F_{x=1},\g_{x=1})$
\State $tdd.weight\leftarrow 1$
\State \Return $reduce(tdd)$

\end{algorithmic}
\label{TDD_Add}
\end{algorithm}

\subsection{Contraction}\label{sec:contraction}
Contraction is the  most fundamental operation in a tensor network. Many design automation tasks of quantum circuits are based on contraction. In this subsection, we consider how to efficiently implement the contraction operation via TDD. 

Let  $\F$ and $\g$ be two reduced TDDs over $I$, and $var$ a subset of $I$ denoting the variables to be contracted. Write  $\cont$ for both tensor and TDD contractions. For any $x\in I$ with $x\preceq index(r_\F)$ and $x\preceq index(r_\g)$, we have by definition Eq.~\ref{eq:contdef} that if $x\in var$, then $\cont\left(\bphi(\F),\bphi(\g), var\right)$ equals
	$$\sum_{c=0}^{1} \cont(\bphi(\F_{x=c}),\bphi(\g_{x=c}), var\backslash\{x\});$$
otherwise, it equals
\begin{align*}
& \overline{x} \cdot \cont(\bphi(\F_{x=0}),\bphi(\g_{x=0}), var)
	 + x \cdot \cont(\bphi(\F_{x=1}),\bphi(\g_{x=1}), var).
\end{align*}
%
%
%
%
%
%

Algorithm~\ref{TDD_Contraction} gives the detailed procedure for TDD contraction. The time complexity \xblue{and the size of the contracted TDD are related to the number of paths in $\F$ and $\g$.
}

To conclude this section, we would like to point out that the \emph{tensor product} of two TDDs $\F$ and $\g$ with disjoint essential indices can be regarded as a special case of contraction.
In particular, we have
$$\bphi(\F\otimes \g) = \cont(\bphi(\F),\bphi(\g), \emptyset),$$
and the time complexity of using Algorithm~\ref{TDD_Contraction} to compute $\F\otimes \g$ becomes $|\F|\cdot |\g|$, where $|\F|$ and $|\g|$ denote the numbers of nodes in $\F$ and $\g$, respectively..

A special case which arises often in applications is when, say, every index in $\F$ precedes any index in $\g$ under the order $\prec$.
%
For this case, to compute the tensor product of $\F$ and $\g$, all we need to do is to replace the terminal node of $\F$ with the root node of $\g$, multiply the weight of the resulting TDD with the weight of $\g$, and perform normalisation and reduction if necessary.
Since we do not need to touch $\g$, the time complexity is simply $\mathcal{O}(|\F|)$.
%


\begin{algorithm}
\caption{$\cont(\F, \g, var)$}
\begin{algorithmic}[1]
\Require {Two reduced TDDs $\F$ and $\g$, and the set $var$ of variables to be contracted.}
\Ensure {The reduced TDD obtained by contracting $\F$ and $\g$ over $var$.}
\If{both $\F$ and $\g$ are trivial}
\State $tdd\leftarrow \F$
\State $tdd.weight \leftarrow w_\F \cdot w_\g \cdot 2^{len(var)}$
\State \Return $tdd$
\EndIf
\State $x\leftarrow$ the smaller index of $r_\F$ and $r_\g$
\State $L\leftarrow \cont(\F_{x=0},\g_{x=0},var\backslash\{x\})$
\State $R\leftarrow \cont(\F_{x=1},\g_{x=1},var\backslash\{x\})$
\If{$x\in var$}
\State \Return $Add(L,R)$
\Else
\State $tdd\leftarrow$ an empty TDD
\State $tdd.root \leftarrow$ a new node $v$ with index $x$
\State $v.low\leftarrow L$
\State $v.high\leftarrow R$
\State $tdd.weight\leftarrow 1$
\State \Return $reduce(tdd)$
\EndIf
\end{algorithmic}
\label{TDD_Contraction}
\end{algorithm}

\section{Two partition schemes} \label{sec:circuitpartition}

The TDD representation of a quantum circuit can be calculated flexibly. In particular, there is no need to expand a quantum gate to an $n$-qubit form (by tensoring an identity matrix). In general, the TDD representation of a quantum circuit can be obtained by contracting the TDDs of individual gates in the circuit in any order.   

In this paper, we assume that all gates in our circuits are either single-qubit gates or \cnot\ gates. For simplicity of presentation, we use the original qubit order (or its inverse). Following this order, we scan the circuit qubit by qubit, and then rank the indices following the circuit order. That is, given two indices $x$ and $x'$ appearing in the circuit, suppose $q_i$ and $q_j$ are the qubits that $x$ and $x'$ are on. Then we set $x\prec x'$ if either $i<j$, or $i=j$ and $x$ is to the left of $x'$ on the qubit wire $q_i$. For example, the selected order 
for the circuit shown in Fig.~\ref{exp-for-par1} is
\begin{equation}\label{eq:prec}
\prec\ :=\ (x_0,x_{0,1},x_{0,2}, y_0, x_{1}, x_{1,1}, x_{1,2}, y_1, x_2, ..., y_2, x_3, ...,y_3).
\end{equation}
\xblue{Note that hyper-edges  (cf. Fig.~\ref{exp-for-hyper-edge}(b)) are used here. For example, $x_{1,1}$ passes a \cx\ gate, a diagonal $T$ gate and another \cx\ gate.}
\blue{It is worth stressing that, once such an index order (like the one in Eq.~\ref{eq:prec}) is selected, it will be used for all TDDs generated in the calculation process.}

Our approach of computing the TDD of a quantum circuit includes two steps. First, we partition the circuit into several parts; second, we calculate the TDD of each part separately and then combine them together through contraction.

\begin{figure}
\centering
\includegraphics[width=0.7\textwidth]{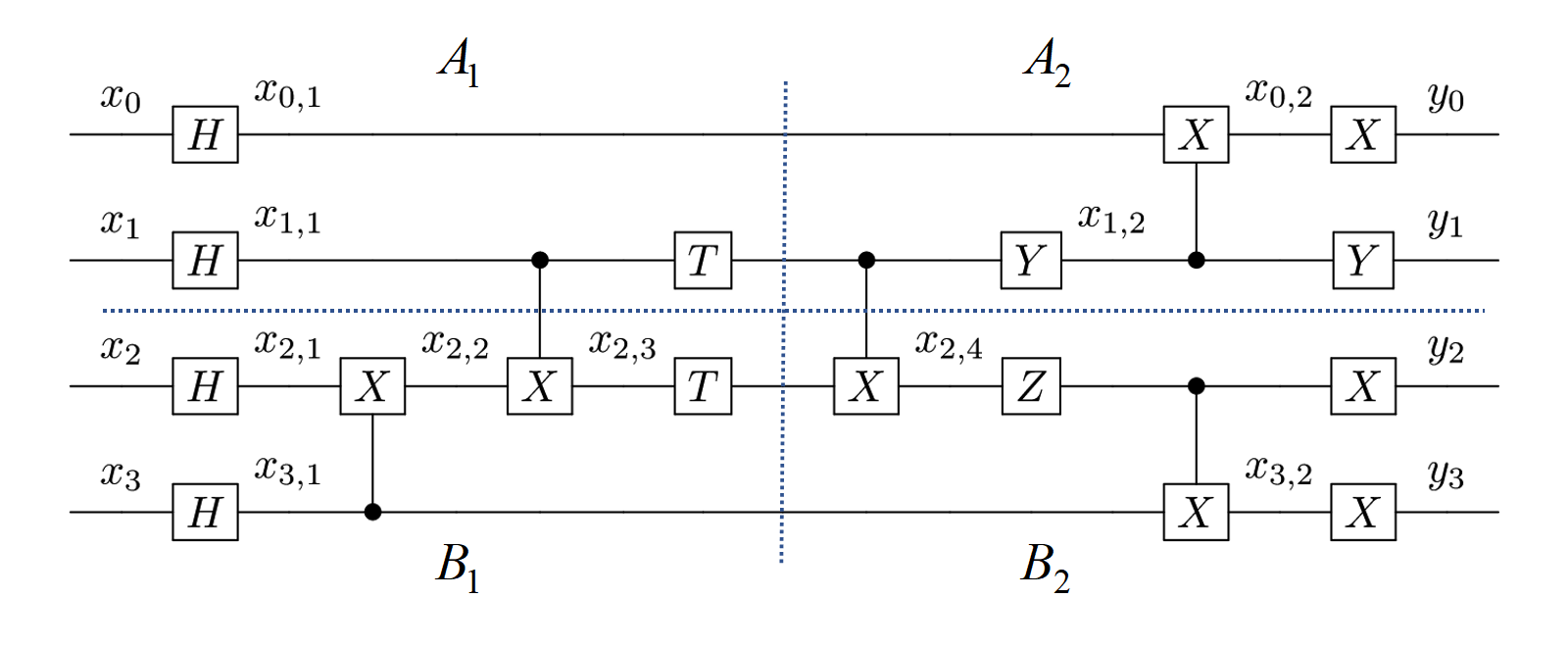}
\caption{Partition Scheme I, where only one \cnot\ cut is allowed each time.}
\label{exp-for-par1}
\end{figure}

While finding the optimal partition scheme is attractive, it is also a very challenging task.  We observe that some simple strategies are already able to reduce the resource consumption significantly during the computation process. In the following, we introduce two straightforward partition schemes.



The first partition scheme divides the circuit \emph{horizontally} into two parts from the middle (so that the upper and lower parts have roughly the same number of qubits) and then cuts it \emph{vertically} 
such that in each part no more than $k$ (a predefined parameter) \cnot\ gates are separated by the horizontal cut, where $k$ is chosen to ensure that the rank of each block of the final circuit is smaller than $2n$, the rank of the tensor of the original circuit. In our experiments, we set $k=\lfloor{n/2}\rfloor$.

 


\begin{example}
Consider the circuit shown in Fig.~\ref{exp-for-par1} 
and set $k=1$, i.e., we allow only one \cnot\ cut at a time. The circuit is divided into four parts as shown by the dotted lines. In the contraction process, we first calculate the TDDs of the four parts separately. Then, contracting the left (right, resp.) two TDDs gives the TDD of the left  (right, resp.) half of the circuit. Finally, we contract these two TDDs and obtain the TDD of the whole circuit. If we set $k=2$, then no vertical cut is required and the circuit is partitioned into two parts: the top half and the bottom half. The same TDD can be obtained by contracting the TDDs of the top and the bottom halves. 
\end{example}

\begin{figure}
\centering
\includegraphics[width=0.7\textwidth]{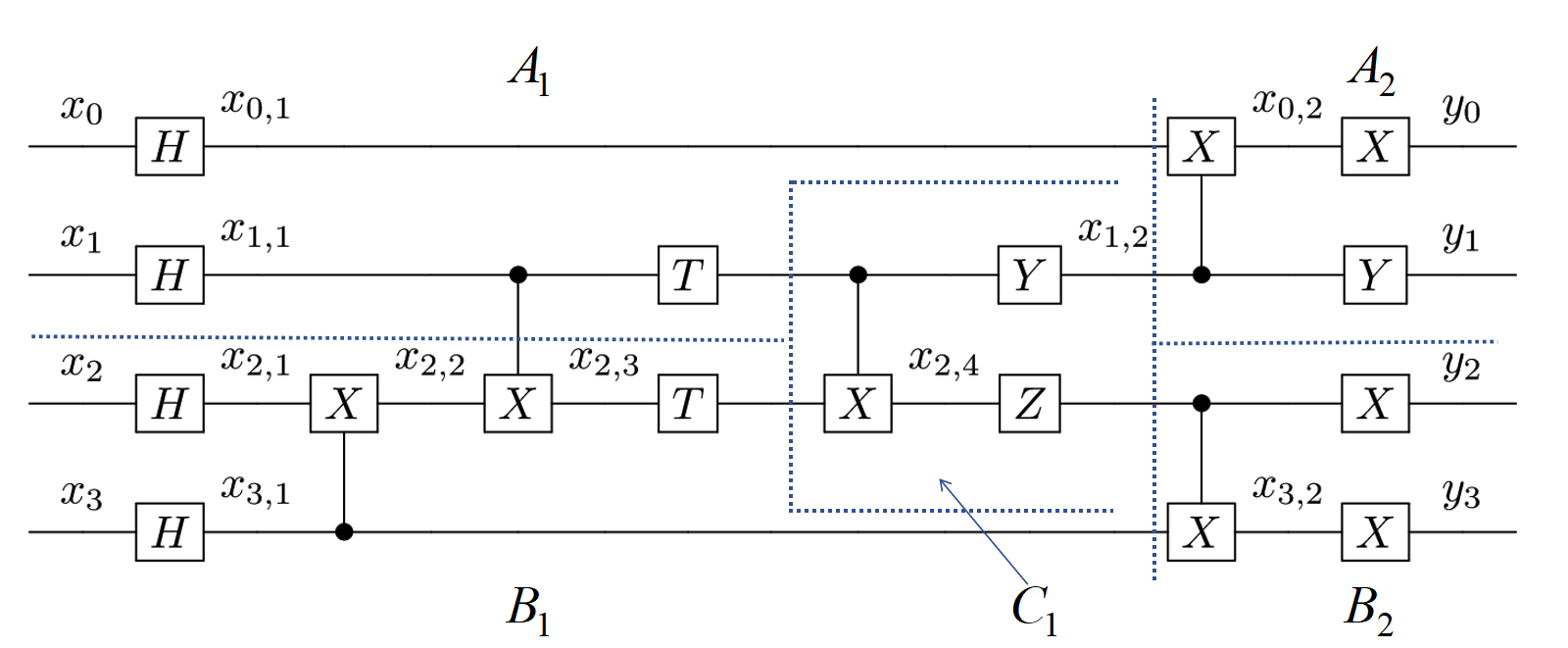}
\caption{Partition Scheme II, where only one \cnot\ cut is allowed each time and part $C$ \rmagenta{involves} up to 2 qubits.}
\label{exp-for-par2}
\end{figure}

Note that the more \cnot\ gates separated by the horizontal cut, the more vertical partitions we need to introduce, and the more tensors with large (near $2n$) rank we need to contract.  Our second partition scheme intends to alleviate this issue by enveloping those separated \cnot\ gates if they are closely located in the circuit. First, we horizontally divide the circuit from the middle as in the first scheme. Whenever $k_1$ (a predefined parameter) \cnot\ gates have been separated by the horizontal cut, we introduce a third small block that wraps a small part of the circuit such that it contains several \cnot\ gates that are separated by the horizontal cut and has no interactions (i.e., shares no \cnot)  with qubits not occupied by that gates in this block. Whenever the third block occupies $k_2$ (a predefined parameter) qubits, we introduce a new vertical cut. The second scheme is a generalisation of the scheme used in \cite{li2019quantum} for classical simulation, where no vertical partition is introduced. In our experiments, we set $k_1$ and $k_2$ as $\lfloor{n/2}\rfloor$  and $\lfloor{n/2}\rfloor+1$, respectively.

\begin{example}
Consider the circuit given in Fig.~\ref{exp-for-par1} again. Suppose we allow one \cx\ cut every time, and limit the number of qubits in part $C$ to two. Then the circuit can be partitioned into five parts as illustrated in Fig.~\ref{exp-for-par2}. 
\xblue{We note that, in this scheme, a third block $C_1$ is introduced only when we have separated $k_1$ (here it is 1) \cx\ gates. A better partition can be obtained by incorporating the first \cx\ gate between $q_1$ and $q_2$, as well as the following two $T$ gates, into $C_1$.}
We then compute and contract the TDDs in the order of $A$, $B$, $C$ for every block split by the vertical lines. The TDD of the whole circuit is then obtained by contracting the TDDs of these blocks in sequence.
\end{example}

Now we make a simple comparison of the above contraction methods. Suppose we compute the TDD (or QMDD) representation of the circuit in Fig.~\ref{exp-for-par1} in the original circuit order. We need in essence to calculate eight $(8, 2, 1)$-contractions, five $(8, 4, 2)$-contractions, one $(6,2,0)$-contraction, and two  contractions between tensors with rank $\leq 4$, where an $(m,n,r)$-contraction is a contraction between a rank $m$ tensor and a rank $n$ tensor over $r$ common indices. In comparison, Partition Scheme I requires one $(8,8,4)$-contraction, two $(5,5,1)$-contractions, five $(5,2,1)$-contractions, and nine contractions between tensors with rank $\leq 4$; while Partition Scheme II requires one $(8,8,4)$-contraction, one $(8,4,2)$-contraction, one $(5,5,1)$-contraction, and 14 contractions between tensors with rank $\leq 4$. As the time and space consumption both grow exponentially with the ranks of the tensors \cite{gray2020hyper}, this illustrates the efficiency of the two partition schemes.


\section{Implementation and Evaluation} \label{sec:experiments}


To demonstrate the effectiveness of TDD as an efficient data structure for the representation and manipulation of quantum functionalities, we developed the TDD package in Python3, implemented the two partition schemes, and empirically compared them with three state-of-the-art approaches in the literature. 

\subsection{Benchmarks}
Most benchmarks we used were published by IBM as part of the 2018 QISKit Developer Challenge\footnote{https://www.ibm.com/blogs/research/2018/08/winners-qiskit-developer-challenge/}, which\footnote{Available from \url{http://iic.jku.at/eda/research/ibm_qx_mapping/}} 
have been wildly used in evaluating qubit mapping algorithms (see, e.g., \cite{AstarZulehner}).
%
%
%
To compare the scalability of different methods, we also tested three commonly used quantum algorithms, including  Bernstein-Vazirani (bv) \cite{bernstein1997quantum}, Quantum Fourier Transform (qft) \cite{nielsen2002quantum}, as well as Quantum Volume (qv) \cite{moll2018quantum}. 
The numbers of qubits and gates in these benchmarks range from 2 to 100 and 5 to 10,223, respectively.


\subsection{TDD Implementation} 

We implemented TDD using Python3. In our calculation process, calculated results are stored in a computed table as in the implementation of ROBDD \cite{brace1990efficient}. In order to improve the reusability of the calculated results in the computed table, we map all indices of a TDD to $\{0,1,\cdots,m-1\}$, where $m$ is the number of different indices of the tensor associated to the TDD, such that TDDs differ only by a renaming of indices will be treated as the same. Our source code is available at Github.\footnote{https://github.com/VeriQC/TDD} 



In our experiments, for the first partition scheme, we set the parameter $k$ as $\lfloor n/2 \rfloor$, where $n$ is the number of qubits in the input circuit. Similarly, for the second partition scheme, we set the two parameters $k_1$ and $k_2$ as $\lfloor n/2 \rfloor$ and $\lfloor n/2 \rfloor+1$, respectively. All experiments were executed on a laptop with Intel i7-1065G7 CPU and 8 GB RAM.

\subsection{Empirical results}
We compared our results with three state-of-the-art approaches for computing quantum functionalities --- Qiskit (https://qiskit.org), the Google TensorNetwork package \cite{roberts2019tensornetwork}, and QMDD  \cite{niemann2015qmdds}. The first two approaches are matrix-based, while QMDD is decision diagram-based. For Qiskit, we call the unitary\_simulator for calculating the unitary matrix of every circuit, and for TensorNetwork, a tensor network is constructed for every circuit and the auto contractor will be used for completing the task. For QMDD, we compute the functionality of an input circuit in a way similar to TDD with no partition, i.e., we construct the QMDD of each quantum gate and then multiply them in the circuit order. \xblue{The QMDD package we used is the version obtained from https://github.com/iic-jku/dd\_package.}


\begin{table}[]
\centering
\caption{Data summary for benchmark circuits taken from \cite{AstarZulehner}, where Time (-MO) denotes the total time of all circuits on which TN (the Google TensorNetwork) is not memory out.}

\setlength{\tabcolsep}{1mm}{
\scalebox{0.9}{
\begin{tabular}{c|c|ccc|c}
                   & \multirow{2}{*}{QMDD} & \multicolumn{3}{c}{TDD} & \multirow{2}{*}{TN} \\ 
                   &                       & No part. & Part. I & Part. II& \\ \hline
Time (-MO) & 2.30 & 259.38  & 117.51 & 88.45 &115.15 \\
 \hline
 Time   & 2.39  & 263.67 & 119.56 & 90.53 &- \\
node num. (final$^*$) & 6413 & 13888   & 13888 & 13888  &-        \\
node num. (max$^*$)   & 15758 &  36194  &    17769 &  17325  &-      \\
ratio (max/final$^*$) & 2.46        & 2.61      & 1.28          & 1.25 &-
\\
\end{tabular}}
}
\label{experiment-results}
\end{table}

We summarise our experimental results on the benchmark circuits from \cite{AstarZulehner} in Table~\ref{experiment-results}. More and detailed results can be found in Table~\ref{full-experiment-results}. The performance of Qiskit is very similar but inferior to TensorNetwork. Except the 15 qubit circuit `rd84\_142', 
Qiskit can finish in 116.7s all circuits which TensorNetwork does not run out of memory. For `rd84\_142', Qiskit runs out of memory but TensorNetwork finishes in 49.3s.
In the following, we omit the results of Qiskit from Table~\ref{full-experiment-results}.





\vspace*{3mm}

\xblue{
\noindent{\textbf{Notes of Table~\ref{full-experiment-results}:}}
\begin{itemize}
\item[1.] TN represents the Google Tensor network package; MO and TO represent, respectively, out of memory and time out of 3600 seconds; sum(-MO)  (sum(-TO), resp.) represents the sum with MO (TO, resp.) circuits above the line it starts
being excluded.
\item[2.] For TDD and QMDD, we list the time (seconds), max number of nodes (\#node max) and final number of nodes (\#node final) in the construction process. For TDD with the two partition schemes, we remove the final number of nodes, as they are all identical to that of TDD with no partition. 
\end{itemize}
}

\renewcommand{\arraystretch}{0.85}
{

\begin{table}[]
\centering
\caption{Experiment Data}

\setlength{\tabcolsep}{0.3mm}{
\scalebox{0.75}{
\begin{tabular}{lcccccccccccccccccc}
\hline
\multicolumn{3}{c}{Benchmarks}                                                                                             & \multicolumn{1}{c}{} & \multicolumn{3}{c}{QMDD}                                                                                                                                                               &  & \multicolumn{3}{c}{TDD No Part.}                                                                                                                                                                & \multicolumn{1}{c}{} & \multicolumn{2}{c}{TDD Part. I}                                                                               & \multicolumn{1}{c}{} & \multicolumn{2}{c}{TDD Part. II}                                                                               & \multicolumn{1}{c}{} & \multicolumn{1}{c}{TN}   \\ \cline{1-3} \cline{5-7} \cline{9-11} \cline{13-14} \cline{16-17} \cline{19-19} 
circ. name           & \#qubit & \#gate &                      & \multicolumn{1}{c}{time} & \multicolumn{1}{c}{\begin{tabular}[c]{@{}c@{}}\#node\\  max\end{tabular}} & \multicolumn{1}{c}{\begin{tabular}[c]{@{}c@{}}\#node\\  final\end{tabular}} &  & \multicolumn{1}{c}{time} & \multicolumn{1}{c}{\begin{tabular}[c]{@{}c@{}}\#node\\ max\end{tabular}} & \multicolumn{1}{c}{\begin{tabular}[c]{@{}c@{}}\#node\\ final\end{tabular}} & \multicolumn{1}{c}{} & \multicolumn{1}{c}{time} & \multicolumn{1}{c}{\begin{tabular}[c]{@{}c@{}}\#node\\ max\end{tabular}} & \multicolumn{1}{c}{} & \multicolumn{1}{c}{time} & \multicolumn{1}{c}{\begin{tabular}[c]{@{}c@{}}\#node\\ max\end{tabular}} & \multicolumn{1}{c}{} & \multicolumn{1}{c}{time} \\ \hline
graycode6\_47  & 6                                                    & 5                                                  &                      & 0.03                     & 12                                                                          & 12                                                                            &  & 0.01                     & 22                                                                          & 22                                                                            &                      & 0.01                     & 22                                                                          &                      & 0.01                     & 22                                                                          &                      & 0.01                     \\
ex-1\_166      & 3                                                    & 19                                                 &                      & 0.03                     & 10                                                                          & 9                                                                             &  & 0.02                     & 22                                                                          & 17                                                                            &                      & 0.03                     & 22                                                                          &                      & 0.04                     & 22                                                                          &                      & 0.01                     \\
4mod5-v0\_20   & 5                                                    & 20                                                 &                      & 0.02                     & 28                                                                          & 22                                                                            &  & 0.03                     & 44                                                                          & 36                                                                            &                      & 0.03                     & 38                                                                          &                      & 0.03                     & 38                                                                          &                      & 0.02                     \\
rd32-v0\_66    & 4                                                    & 34                                                 &                      & 0.03                     & 14                                                                          & 9                                                                             &  & 0.04                     & 25                                                                          & 20                                                                            &                      & 0.03                     & 23                                                                          &                      & 0.04                     & 28                                                                          &                      & 0.03                     \\
decod24-v0\_38 & 4                                                    & 51                                                 &                      & 0.03                     & 18                                                                          & 16                                                                            &  & 0.11                     & 38                                                                          & 35                                                                            &                      & 0.06                     & 38                                                                          &                      & 0.08                     & 38                                                                          &                      & 0.04                     \\
4gt13\_92      & 5                                                    & 66                                                 &                      & 0.03                     & 26                                                                          & 26                                                                            &  & 0.12                     & 58                                                                          & 58                                                                            &                      & 0.12                     & 58                                                                          &                      & 0.11                     & 58                                                                          &                      & 0.04                     \\
4mod5-bdd\_287 & 7                                                    & 70                                                 &                      & 0.04                     & 111                                                                         & 74                                                                            &  & 0.26                     & 155                                                                         & 109                                                                           &                      & 0.24                     & 128                                                                         &                      & 0.17                     & 128                                                                         &                      & 0.04                     \\
alu-v0\_26     & 5                                                    & 84                                                 &                      & 0.04                     & 46                                                                          & 23                                                                            &  & 0.25                     & 68                                                                          & 49                                                                            &                      & 0.14                     & 68                                                                          &                      & 0.15                     & 68                                                                          &                      & 0.05                     \\
4gt5\_76       & 5                                                    & 91                                                 &                      & 0.03                     & 50                                                                          & 20                                                                            &  & 0.30                     & 73                                                                          & 34                                                                            &                      & 0.15                     & 73                                                                          &                      & 0.15                     & 73                                                                          &                      & 0.06                     \\
4gt5\_77       & 5                                                    & 131                                                &                      & 0.03                     & 46                                                                          & 27                                                                            &  & 0.61                     & 91                                                                          & 57                                                                            &                      & 0.29                     & 91                                                                          &                      & 0.24                     & 77                                                                          &                      & 0.07                     \\
decod24-v3\_45 & 5                                                    & 150                                                &                      & 0.04                     & 46                                                                          & 17                                                                            &  & 0.41                     & 74                                                                          & 35                                                                            &                      & 0.25                     & 64                                                                          &                      & 0.23                     & 82                                                                          &                      & 0.06                     \\
cnt3-5\_179    & 16                                                   & 175                                                &                      & 0.04                     & 55                                                                          & 48                                                                            &  & 1.00                     & 116                                                                         & 104                                                                           &                      & 0.51                     & 148                                                                         &                      & 0.45                     & 148                                                                         &                      & MO                       \\
0410184\_169   & 14                                                   & 211                                                &                      & 0.03                     & 63                                                                          & 39                                                                            &  & 0.87                     & 117                                                                         & 81                                                                            &                      & 0.47                     & 125                                                                         &                      & 0.39                     & 102                                                                         &                      & 3.59                     \\
sys6-v0\_111   & 10                                                   & 215                                                &                      & 0.05                     & 473                                                                         & 247                                                                           &  & 3.50                     & 877                                                                         & 562                                                                           &                      & 1.58                     & 685                                                                         &                      & 1.00                     & 562                                                                         &                      & 0.14                     \\
4gt4-v0\_72    & 6                                                    & 258                                                &                      & 0.03                     & 98                                                                          & 31                                                                            &  & 1.01                     & 136                                                                         & 68                                                                            &                      & 0.50                     & 123                                                                         &                      & 0.51                     & 119                                                                         &                      & 0.14                     \\
sym6\_316      & 14                                                   & 270                                                &                      & 0.09                     & 5608                                                                        & 1520                                                                          &  & 11.55                    & 10836                                                                       & 3028                                                                          &                      & 2.70                     & 3028                                                                        &                      & 2.65                     & 3028                                                                        &                      & 2.35                     \\
sym9\_146      & 12                                                   & 328                                                &                      & 0.06                     & 523                                                                         & 229                                                                           &  & 6.66                     & 1496                                                                        & 515                                                                           &                      & 2.60                     & 515                                                                         &                      & 2.62                     & 515                                                                         &                      & 0.31                     \\
mod8-10\_178   & 6                                                    & 342                                                &                      & 0.03                     & 72                                                                          & 18                                                                            &  & 0.98                     & 144                                                                         & 40                                                                            &                      & 0.73                     & 120                                                                         &                      & 0.63                     & 140                                                                         &                      & 0.17                     \\
rd84\_142      & 15                                                   & 343                                                &                      & 0.26                     & 6922                                                                        & 3588                                                                          &  & 56.63                    & 18006                                                                       & 8017                                                                          &                      & 15.69                    & 9175                                                                        &                      & 8.98                     & 9095                                                                        &                      & 49.33                    \\
alu-v2\_31     & 5                                                    & 451                                                &                      & 0.04                     & 44                                                                          & 18                                                                            &  & 1.03                     & 93                                                                          & 44                                                                            &                      & 0.69                     & 87                                                                          &                      & 0.77                     & 91                                                                          &                      & 0.17                     \\
cnt3-5\_180    & 16                                                   & 485                                                &                      & 0.05                     & 164                                                                         & 48                                                                            &  & 3.29                     & 349                                                                         & 104                                                                           &                      & 1.54                     & 319                                                                         &                      & 1.60                     & 311                                                                         &                      & MO                       \\
rd53\_133      & 7                                                    & 580                                                &                      & 0.04                     & 78                                                                          & 26                                                                            &  & 1.94                     & 136                                                                         & 59                                                                            &                      & 0.97                     & 123                                                                         &                      & 0.98                     & 123                                                                         &                      & 0.35                     \\
majority\_239  & 7                                                    & 612                                                &                      & 0.06                     & 87                                                                          & 16                                                                            &  & 2.08                     & 231                                                                         & 39                                                                            &                      & 1.31                     & 231                                                                         &                      & 1.26                     & 177                                                                         &                      & 0.37                     \\
sf\_276        & 6                                                    & 778                                                &                      & 0.06                     & 59                                                                          & 19                                                                            &  & 2.17                     & 100                                                                         & 42                                                                            &                      & 1.32                     & 90                                                                          &                      & 1.12                     & 100                                                                         &                      & 0.45                     \\
con1\_216      & 9                                                    & 954                                                &                      & 0.07                     & 214                                                                         & 37                                                                            &  & 4.85                     & 373                                                                         & 91                                                                            &                      & 3.10                     & 329                                                                         &                      & 2.25                     & 242                                                                         &                      & 1.00                     \\
cm42a\_207     & 14                                                   & 1776                                               &                      & 0.09                     & 211                                                                         & 81                                                                            &  & 7.69                     & 639                                                                         & 216                                                                           &                      & 7.43                     & 460                                                                         &                      & 7.91                     & 460                                                                         &                      & 17.23                    \\
hwb6\_56       & 7                                                    & 6723                                               &                      & 0.38                    & 242                                                                         & 88                                                                            &  & 53.91                    & 824                                                                         & 178                                                                           &                      & 29.99                    & 612                                                                         &                      & 27.73                    & 731                                                                         &                      & 10.46                    \\
sqn\_258       & 10                                                   & 10223                                              &                      & 0.67                    & 438                                                                         & 105                                                                           &  & 102.34                   & 1051                                                                        & 228                                                                           &                      & 47.09                    & 974                                                                         &                      & 28.45                    & 747                                                                         &                      & 28.66                    \\ \hline

sum(-MO) & -                                                   & -                                              &                      & 2.30                    & -                                                                         & -                                                                           &  & 259.38                   & -                                                                        & -                                                                           &                      & 117.51                    & -                                                                         &                      & 88.45                    & -                                                                         &                      & 115.15   \\
sum & -                                                   & -                                              &                      & 2.39                    & 15758                                                                         & 6413                                                                           &  & 263.67                   & 36194                                                                        & 13888                                                                           &                      & 119.56                   & 17769                                                                         &                      & 90.53                    & 17325                                                                         &                      & -    \\ 
\hline\hline
bv\_10         & 10                                                   & 29                                                 &                      & 0.04                     & 20                                                                          & 20                                                                            &  & 0.09                     & 56                                                                          & 56                                                                            &                      & 0.06                     & 56                                                                          &                      & 0.06                     & 56                                                                          &                      & 0.02                     \\
bv\_20         & 20                                                   & 59                                                 &                      & 0.04                     & 40                                                                          & 40                                                                            &  & 0.29                     & 116                                                                         & 116                                                                           &                      & 0.21                     & 116                                                                         &                      & 0.21                     & 116                                                                         &                      & MO                       \\
bv\_30         & 30                                                   & 89                                                 &                      & 0.04                     & 60                                                                          & 60                                                                            &  & 0.70                     & 176                                                                         & 176                                                                           &                      & 0.66                     & 176                                                                         &                      & 0.49                     & 176                                                                         &                      & MO                       \\
bv\_40         & 40                                                   & 119                                                &                      & 0.04                     & 80                                                                          & 80                                                                            &  & 1.22                     & 236                                                                         & 236                                                                           &                      & 0.84                     & 236                                                                         &                      & 0.81                     & 236                                                                         &                      & MO                       \\
bv\_50         & 50                                                   & 149                                                &                      & 0.05                    & 100                                                                         & 100                                                                           &  & 1.74                     & 296                                                                         & 296                                                                           &                      & 1.34                     & 296                                                                         &                      & 1.26                     & 296                                                                         &                      & MO                       \\
bv\_60         & 60                                                   & 179                                                &                      & 0.06                     & 120                                                                         & 120                                                                           &  & 2.82                     & 356                                                                         & 356                                                                           &                      & 1.88                     & 356                                                                         &                      & 1.77                     & 356                                                                         &                      & MO                       \\
bv\_70         & 70                                                   & 209                                                &                      & 0.06                     & 140                                                                         & 140                                                                           &  & 4.45                     & 416                                                                         & 416                                                                           &                      & 3.18                     & 416                                                                         &                      & 2.36                     & 416                                                                         &                      & MO                       \\
bv\_80         & 80                                                   & 239                                                &                      & 0.07                     & 160                                                                         & 160                                                                           &  & 4.59                     & 476                                                                         & 476                                                                           &                      & 3.76                     & 476                                                                         &                      & 3.58                     & 476                                                                         &                      & MO                       \\
bv\_90         & 90                                                   & 269                                                &                      & 0.08                     & 180                                                                         & 180                                                                           &  & 5.91                     & 536                                                                         & 536                                                                           &                      & 4.14                     & 536                                                                         &                      & 5.63                     & 536                                                                         &                      & MO                       \\
bv\_100        & 100                                                  & 299                                                &                      & 0.09                     & 200                                                                         & 200                                                                           &  & 7.38                     & 596                                                                         & 596                                                                           &                      & 4.86                     & 596                                                                         &                      & 6.96                     & 596                                                                         &                      & MO                       \\\hline\hline
qv\_n2\_d5     & 2                                                    & 50                                                 &                      & 0.04                     & 6                                                                           & 6                                                                             &  & 0.17                     & 16                                                                          & 16                                                                            &                      & 0.12                     & 16                                                                          &                      & 0.14                     & 16                                                                          &                      & 0.05                     \\
qv\_n3\_d5     & 3                                                    & 50                                                 &                      & 0.04                     & 22                                                                          & 22                                                                            &  & 0.18                     & 64                                                                          & 64                                                                            &                      & 0.15                     & 64                                                                          &                      & 0.16                     & 64                                                                          &                      & 0.02                     \\
qv\_n4\_d5     & 4                                                    & 100                                                &                      & 0.06                     & 86                                                                          & 86                                                                            &  & 1.27                     & 256                                                                         & 256                                                                           &                      & 0.89                     & 256                                                                         &                      & 1.03                     & 256                                                                         &                      & 0.03                     \\
qv\_n5\_d5     & 5                                                    & 100                                                &                      & 0.09                     & 342                                                                         & 342                                                                           &  & 3.74                     & 1024                                                                        & 1024                                                                          &                      & 2.26                     & 1024                                                                        &                      & 2.60                     & 1024                                                                        &                      & 0.13                     \\
qv\_n6\_d5     & 6                                                    & 150                                                &                      & 0.21                     & 1366                                                                        & 1366                                                                          &  & 17.71                    & 4096                                                                        & 4096                                                                          &                      & 9.18                     & 4096                                                                        &                      & 10.59                    & 4096                                                                        &                      & 0.05                     \\
qv\_n7\_d5     & 7                                                    & 150                                                &                      & 3.51                    & 5462                                                                        & 5462                                                                          &  & 81.97                    & 16384                                                                       & 16384                                                                         &                      & 35.48                    & 16384                                                                       &                      & 37.56                    & 16384                                                                       &                      & 0.07                     \\
qv\_n8\_d5     & 8                                                    & 200                                                &                      & 77.39	&21846	&21846
                                                                          &  & 306.52&	65536	&65536
                                                                         &                      & 412.99	&65536
                                                                  &                      &456.98&	65536
                                                                   &                      & 0.07
\\
qv\_n9\_d5     & 9                                                    & 200                                                &                      & 966.62	&87382	&87382
                                                                          &  & TO&		&
                                                                         &                      & 2004.36 
	&262144

                                                                  &                      &2195.29 
&262144

                                                                   &                      & 0.08
\\
\hline\hline
qft\_5         & 5                                                    & 15                                                 &                      & 0.02                     & 32                                                                          & 32                                                                            &  & 0.02                     & 63                                                                          & 63                                                                            &                      & 0.01                     & 63                                                                          &                      & 0.02                     & 63                                                                          &                      & 0.09                     \\
qft\_6         & 6                                                    & 21                                                 &                      & 0.03                     & 64                                                                          & 64                                                                            &  & 0.03                     & 127                                                                         & 127                                                                           &                      & 0.02                     & 127                                                                         &                      & 0.02                     & 127                                                                         &                      & 0.03                     \\
qft\_7         & 7                                                    & 28                                                 &                      & 0.03                     & 128                                                                         & 128                                                                           &  & 0.07                     & 255                                                                         & 255                                                                           &                      & 0.03                     & 255                                                                         &                      & 0.04                     & 255                                                                         &                      & 0.04                     \\
qft\_8         & 8                                                    & 36                                                 &                      & 0.03                     & 256                                                                         & 256                                                                           &  & 0.10                     & 511                                                                         & 511                                                                           &                      & 0.05                     & 511                                                                         &                      & 0.05                     & 511                                                                         &                      & 0.05                     \\
qft\_9         & 9                                                    & 45                                                 &                      & 0.03                     & 512                                                                         & 512                                                                           &  & 0.15                     & 1023                                                                        & 1023                                                                          &                      & 0.08                     & 1023                                                                        &                      & 0.08                     & 1023                                                                        &                      & 0.06                     \\
qft\_10        & 10                                                   & 55                                                 &                      & 0.04                     & 1024                                                                        & 1024                                                                          &  & 0.37                     & 2047                                                                        & 2047                                                                          &                      & 0.14                     & 2047                                                                        &                      & 0.17                     & 2047                                                                        &                      & 0.12                     \\
qft\_11        & 11                                                   & 66                                                 &                      & 0.05                     & 2048                                                                        & 2048                                                                          &  & 0.71                     & 4095                                                                        & 4095                                                                          &                      & 0.24                     & 4095                                                                        &                      & 0.24                     & 4095                                                                        &                      & 0.36                     \\
qft\_12        & 12                                                   & 78                                                 &                      & 0.08                     & 4096                                                                        & 4096                                                                          &  & 1.27                     & 8191                                                                        & 8191                                                                          &                      & 0.45                     & 8191                                                                        &                      & 0.45                     & 8191                                                                        &                      & 1.55                     \\
qft\_13        & 13                                                   & 91                                                 &                      & 0.21                     & 8192                                                                        & 8192                                                                          &  & 2.88                     & 16383                                                                       & 16383                                                                         &                      & 0.82                     & 16383                                                                       &                      & 0.85                     & 16383                                                                       &                      & 2.85                     \\
qft\_14        & 14                                                   & 105                                                &                      & 0.39                     & 16384                                                                       & 16384                                                                         &  & 5.08                     & 32767                                                                       & 32767                                                                         &                      & 1.83                     & 32767                                                                       &                      & 1.81                     & 32767                                                                       &                      & 51.11                    \\
qft\_15        & 15                                                   & 120                                                &                      & 0.78                   & 32768                                                                       & 32768                                                                         &  & 9.06                     & 65535                                                                       & 65535                                                                         &                      & 3.69                     & 65535                                                                       &                      & 3.83                     & 65535                                                                       &                      & MO                       \\
qft\_16        & 16                                                   & 136                                                &                      & 1.8                    & 65536                                                                       & 65536                                                                         &  & 17.68                    & 131071                                                                      & 131071                                                                        &                      & 7.61                     & 131071                                                                      &                      & 7.37                     & 131071                                                                      &                      & MO                       \\
qft\_17        & 17                                                   & 153                                                &                      & 4.74                    & 131072                                                                      & 131072                                                                        &  & 36.50                    & 262143                                                                      & 262143                                                                        &                      & 15.09                    & 262143                                                                      &                      & 14.58                    & 262143                                                                      &                      & MO                       \\
qft\_18        & 18                                                   & 171                                                &                      & 13.65                   & 262144                                                                      & 262144                                                                        &  & 73.64                    & 524287                                                                      & 524287                                                                        &                      & 29.67                    & 524287                                                                      &                      & 30.55                    & 524287                                                                      &                      & MO                       \\
qft\_19        & 19                                                   & 190                                                &                      & 38.55                   & 524288                                                                      & 524288                                                                        &  & 141.63                   & 1048575                                                                     & 1048575                                                                       &                      & 60.82                    & 1048575                                                                     &                      & 63.54                    & 1048575                                                                     &                      & MO                       \\
qft\_20        & 20                                                   & 210                                                &                      & 137.98                  & 1048576                                                                     & 1048576                                                                       &  & 332.00                   & 2097151                                                                     & 2097151                                                                       &                      & 132.99                   & 2097151                                                                     &                      & 131.39                   & 2097151                                                                     &                      & MO                       \\
qft\_21        & 21                                                   & 231                                                &                      & 526.57                  & 2097152                                                                     & 2097152                                                                       &  & 2546.23                  & 4194303                                                                     & 4194303                                                                       &                      & 274.99                   & 4194303                                                                     &                      & 269.77                   & 4194303                                                                     &                      & MO                       \\ 
qft\_22 &22&253&&2102.72&4194304&4194304&&TO&-&-&&1668.13&	8388607&&1447.21 &	8388607&&MO\\


\hline\hline\hline
sum(-TO)            &                                                      &                                                    &                      & 809.24                  & 4240260                                                                     & 4230915                                                                       &  & 3871.86 
                  & 8515357                                                                     & 8493051                                                                       &                      & 1130.09                    &8496932
                                                                     &                      & 1147.50
                   & 8496488                                                                     &                      & -  
                   \\ \hline
\end{tabular}
}
}

\label{full-experiment-results}
\end{table}
}

\subsubsection{Compare with matrix-based methods}

As mentioned before, matrix-based methods, like Qiskit and TensorNetwork, represent an $n$-qubit circuit by a $2^{n} \times 2^{n}$ matrix. Assume that all data in such a matrix is represented in data type $complex128$. Then 64GB of memory must be allocated for the matrix of a 16-qubit circuit. This implies in particular that in our laptop (with 8GB RAM) these methods can process quantum circuits with at most 15 qubits. This observation is confirmed by our experiments. 
%
In comparison, the DD-based methods are often much more compact. \xblue{Indeed, for the qft circuits and on our laptop, both QMDD and TDD can process quantum circuits with up to 22 qubits. For example, both DDs can generate the functionality of the circuit `qft\_22' by using less than 256 MB memory.} For the bv circuits, this characteristic of DDs is even striking, as both DDs can easily process bv circuits with as many as 100 qubits in a few seconds! 

On the other hand, when the number of qubits is small ($\leq 10$), TensorNeteork usually works faster than DD-based methods. 
This is perhaps due to that the time-consumption for transforming matrices to decision diagrams is not negligible. 
\xblue{Surprisingly, while it takes only 0.08s for TensorNetwork to compute the tensor of the circuit `qv\_n9\_d5' (with 9 qubits and 200 gates), QMDD uses nearly 1000 seconds and TDD (with no partition) times out. For qv circuits, during the construction process, both QMDD and TDD may generate decision diagrams with $\mathcal{O}(2^{2n})$ nodes. 
Thus, it has a worst case time complexity $\mathcal{O}(2^{8n})$, as contracting two TDDs $\F$ and $\g$ needs $O(|\F|^2\cdot |\g|^2)$  time in the worst case (cf. Sec.~\ref{sec:contraction}). In comparison, for TensorNetwork, it uses optimised matrix multiplication and needs at most $\mathcal{O}(2^{6n})$ time.} 
It turns out that the TDD representation of `qv\_n9\_d5' has 262,144 nodes, while in comparison the TDD of `qft\_17' (with 17 qubits and 153 gates) has 262,143 nodes. This suggests that some quantum circuits can be better processed by TensorNetwork than DD-based methods and there are quantum circuits which may have no compact DD representations. 






\subsubsection{Compare with QMDD}
Recall that every non-terminal node in a TDD has two successors while any non-terminal node in a QMDD has four. In principle, the TDD representation of a quantum circuit has, \xblue{as we have seen in Section~\ref{sec:qmdd_vs._tdd}, 1-3 times the number of nodes as the circuit's QMDD, provided that the same order is used. Thus, the memory usage of the TDD representation is at most three times of that of the QMDD representation.} This observation is consistent with our experimental results. As a consequence, the TDD representation is as compact as QMDD.


\begin{figure}
	\centering
	\includegraphics[width=0.55\textwidth]{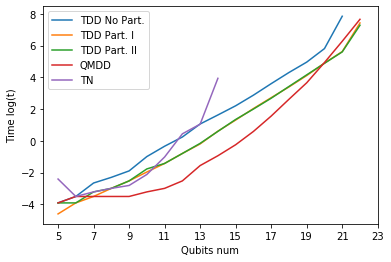}
	\caption{The logarithmic time consumption for constructing the functionality of qft circuits as the number of qubits increases, where  timeout is set as 3600s.}
	\label{qft_data}
\end{figure}

As far as runtime efficiency is concerned, \xblue{on the RevLib benchmarks from \cite{AstarZulehner}, the runtimes of the three TDD schemes are 38-110 times of that of QMDD; but, when considering the real quantum bv, qv, and qft circuits, the runtimes of the three TDD schemes are about 4.5, 1.3, 1.3 times of that of QMDD.} Considering that the TDD package is implemented in Python and QMDD is implemented in C++, this suggests that TDD, after further optimisation, has the potential to be comparable with QMDD. Actually, as can been observed from Fig.~\ref{qft_data}, for qft circuits with 20 or more qubits, the TDD package with either partition is already faster than QMDD.

\subsubsection{Compare among TDD schemes}

In general, the two partition schemes can significantly decrease the time-consumption for constructing the functionality of a quantum circuit.  From Table~\ref{experiment-results} we can see that both partition schemes can decrease the time-consumption by at least $50\%$ when compared with the no-partition scheme. This judgement is also confirmed by experiments on bv, qft, and qv circuits (cf. Table~\ref{full-experiment-results}).

Table~\ref{experiment-results} also suggests that the TDD construction with either partition scheme often has smaller intermediate diagrams than QMDD and the no-partition TDD. Let
\begin{align*}
\alpha = \frac{\mbox{maximum size of all DDs during the computing process}}{\mbox{the size of the final DD} }.
\end{align*}
Table~\ref{experiment-results} shows that the $\alpha$ values of QMDD and the three TDD schemes are, respectively, 2.46, 2.61, 1.28 and 1.25. That is, the ratio could be halved if either partition scheme is adopted.




\subsection{Summary and Discussion}
From the above empirical results we can see that 
\begin{itemize}
	\item TDD is compact and memory-saving and can be used for calculating the functionality of large circuits.
	\item TDD is flexible and can be easily combined with tensor network techniques (e.g., partition) to further improve its performance.
\end{itemize}


Besides representing the functionality of quantum circuits, TDD can also be used in the classical simulation of quantum circuits.
Experimental results show that the performance of TDD is similar to that of QMDD reported in \cite{zulehner2018advanced}. For example, we can obtain all amplitudes of  `qft\_$k$' circuits within 3 seconds for $k\leq 64$. We also conducted experiments on the simulation of random quantum circuits. The performance is also similar to that of QMDD.

In addition, TDD can also be used to calculate the trace of a quantum circuit, which plays a central role in calculating fidelity and hence checking if two quantum circuits are  approximately equivalent. \xblue{
In our recent work \cite{hong2021approximate}, two noisy quantum circuits, described by two super operators $\mathcal{E}_1, \mathcal{E}_2$ respectively, are said to be \emph{approximate equivalent} if $F_J(\mathcal{E}_1,\mathcal{E}_2)>1-\epsilon$, where $\epsilon \in [0,1]$ is a given value and $F_J$ is the Jamiolkowski fidelity \cite{Gil05}. Suppose one of the circuits is \emph{ideal} and its functionality is described by a unitary $U$, and the other is \emph{noisy} and its functionality is described by a super operator $\mathcal{E}=\{E_k\}_{k\in K}$. Then $F_J(\mathcal{E},U)=\frac{1}{2^{2n}}\sum_{k\in K}{|\textsf{Tr}(U^{\dagger}E_k)|^2}$. Each trace term can be calculated by contracting a tensor network obtained by connecting the input and output of each qubit in a miter constructed from the two quantum circuits. 
As demonstrated in \cite{hong2021approximate}, this task can be excellently completed by using TDD. The current QMDD package does not provide a customised procedure for trace calculation; instead, we need to get the QMDD representation first. Take the (ideal) `qv\_n9\_d5' circuit for example. Using our TDD package, it takes less than 2 seconds to calculate its trace \blue{(which does not require the calculation of the TDD representation of the circuit)}, while computing its QMDD requires nearly 1,000 seconds (cf. Table~\ref{full-experiment-results}). In comparison with matrix-based methods, experiments also show that the TDD method outperforms Qiskit and TensorNetwork, especially when the number of qubits is more than seven. We refer the reader to \cite{hong2021approximate} for more details.\footnote{\blue{It is worth noting that the results reported in \cite[Table~1]{hong2021approximate} was based on a previous implementation of TDD and the current implementation can be an order of magnitude faster.} } } 

As a direct extension of BDD from Boolean functions to tensors, TDDs can also represent classical gates.
More important, we can also represent the measurements and classically controlled gates as TDDs, which makes it suitable for coping with tasks such as equivalence checking of dynamic quantum circuits \cite{hong2021equivalence}. \xblue{The reader is also referred to \cite{Burgholzer21dyn} for the QMDD approach for the same task.}

\section{Conclusion} \label{sec:conclusion}

We proposed a decision diagram style data structure --- TDD --- for more principled applications of tensor networks. Based on a Boole-Shannon style expansion for tensors, it is rigorously proved that TDD provides a universal and canonical representation for quantum functionalities. As a decision diagram, TDD is also compact as redundant or isomorphic nodes have been completely removed or merged.
Experiments on a variety of benchmark circuits include qft confirm that TDD is compact, demonstrate its efficiency, and show that it often outperforms the Google TensorNetwork package for circuits with 15 or more qubits. Moreover, thanks to its origin from tensor networks, many techniques developed for tensor networks can be directly imported into TDD. As an example, we have shown that the TDD of a quantum circuit can be computed more efficiently by exploiting circuit partitions that were previously introduced for the classical simulation of quantum circuits.


It is expected that TDD can be used, possibly together with the Google TensorNetwork, in many design automation tasks, e.g., simulation and equivalence checking, for quantum circuits. In particular, we plan to combine TDD with TensorNetwork in our partition-based schemes. When the rank of the tensor is small, TensorNetwork runs faster than both decision diagrams; however, its performance decrease sharply when the number of qubits increases. Thus, we can use TensorNetwok to compute the (local) functionalities of each part, transform them into TDDs, and then contract these local TDDs to obtain the TDD representation of the quantum circuit.

The current TDD package is far from optimal. Future work will implement TDD in C++ and exploit more optimisation techniques developed in tensor networks, e.g., tree decomposition \cite{markov2008simulating}. 

In this paper, we assume that all indices of a tensor can only take values from $\{0,1\}$. This restriction can be removed by allowing a node in a TDD to have any number of successors. Moreover, different nodes can have different numbers of successors. What we should ensure is that all nodes corresponding to the same index have the same number of successors, and the contraction should be conducted on all its successors when this index is contracted. In our follow-up work, we plan to construct such a generalised package and make it suitable for more tensor network tasks.


\begin{acks}          
    \xblue{We thank the three reviewers for their valuable suggestions. The observation that each QMDD node corresponds to 1-3 TDD nodes (for some TDDs) was suggested by one of the reviewers. We also thank Lukas Burgholzer for suggestions on installing the QMDD package on a Windows platform.}
	This work is partially supported by the National Key R\&D Program of China (Grant No.: 2018YFA0306701) and the Australian Research Council (Grant No.s: DP180100691, DP220102059). 
\end{acks}


\bibliographystyle{acm}
\bibliography{references}

\subsection*{Appendix: Detailed Proofs}\label{sec:appendix}
\begin{lemma*}[Lemma~\ref{lem:tensor-phase}]
For any tensor $\phi$ which is not normal, there exists a unique normal tensor $\phi^*$ such that $\phi = p\cdot \phi^*$, where $p$ is a nonzero complex number.
\end{lemma*}
\begin{proof}
Since $\phi$ is not normal, we have $\phi\not=0$. Let $p=\phi(\vec{a}^*)$ where $\vec{a}^*$ is the pivot of $\phi$. Then 
$\phi^* := \frac{1}{p} \cdot \phi$ is a normal tensor which satisfies the condition. Furthermore, suppose there is another normal tensor $\phi'$ such that $\phi = p'\cdot \phi'$ for some complex number $p'$. Then we have $\phi = p \cdot \phi^* = p' \cdot \phi'$. Obviously, 
\rmagenta{we have $|p|=|p'|$ and, by definition,}
$\phi^*$ and $\phi'$ also share the same pivot $\vec{a}^*$ with $\phi$. It then follows that $\phi^*(\vec{a}^*) = \phi'(\vec{a}^*) =1$. Thus $p = p'$, and $\phi^* = \phi'$.
\end{proof}

\begin{lemma*}\label{lem:normalTDDedge}
Every terminal node of a normal TDD $\F$ has value 0 or 1. Moreover, let $v$ be a non-terminal node of $\F$ with $\bphi(v)\neq 0$, and $w_0$ and $w_1$ the weights on its low- and high-edge. Then  we have either $w_0=1$ or $w_1=1$.
\end{lemma*}
\begin{proof} 
The terminal case is clear by definition.
Suppose the index set of $\F$ is $\{x_1, \ldots, x_n\}$ and  $x_1\prec \ldots \prec x_n$.
 For a non-terminal node $v$,
let $\phi$, $\phi_l$, and $\phi_h$ denote $\bphi(v), \bphi(low(v))$, and $\bphi(high(v))$, respectively. 
Then $\phi=w_0 \cdot \overline{x}_i \cdot \phi_l + w_1 \cdot x_i \cdot \phi_h$ by Eq.~\ref{eq:tdd-intp}, where $x_i = index(v)$.  
Note that $\phi$ is a tensor over $\{x_i, \ldots, x_n\}$ and both $\phi_l$ and $\phi_h$ can be regarded as tensors over $\{x_{i+1}, \ldots, x_n\}$. 

Let $\vec{a}^*$ be the pivot of $\phi$. 
 Suppose $\vec{a}^* = 0\vec{b}^*$ for some $\vec{b}^*\in \{0,1\}^{\rmagenta{n-i}}$; that is,  $\vec{a}^*$ takes value 0 at index $x_i$. Then by $1= \phi(\vec{a}^*)= w_0 \cdot \phi_l(\vec{b}^*)$, we have 
 $|w_0| \geq 1$ from the fact that $\phi_l$ is normal. On the other hand,
let $\vec{c}$ be the pivot of $\phi_l$. Then from
 $\phi(0\vec{c})= w_0 \cdot \phi_l(\vec{c}) = w_0$ and the fact that $\phi$ is normal, we have $|w_0| \leq 1$. Thus $|w_0| = 1$ and $|\phi_l(\vec{b}^*)|=1$.
  Now for any $\vec{b}\in \{0,1\}^{\rmagenta{n-i}}$ which is less than $\vec{b}^*$ in the lexicographic order, we have $|\phi_l(\vec{b})| =  |\phi(0\vec{b})|< |\phi(\vec{a}^*)|=1$, as $0\vec{b}$ is less than $0\vec{b}^*=\vec{a}^*$. Thus by definition, $\vec{b}^*$ is actually the pivot of $\phi_l$. So $\phi_l(\vec{b}^*)=1$ and hence $w_0=1$. 
  
  The case when $\vec{a}^*$ takes value 1 at index $x_i$ is analogous.
\end{proof}

\begin{theorem*}[Theorem~\ref{thm:complete}]
Let $I = \{x_1,x_2,...,x_n\}$ be a set of indices and $\prec$ a linear order on it. For any tensor $\phi$ with index set $I$, there exists a $\prec$-ordered normal TDD $\F$ such that $\bphi(\F) = \phi$.
\end{theorem*}
\begin{proof}
We prove the result by induction on the cardinality of $I$. If $|I|=0$, the tensor is a constant and the conclusion clearly holds after possible application of NR1. Suppose the statement holds for tensors with up to $n$ indices. We show it is also true for tensors with $n+1$ indices. Let $I=\{x_1, ..., x_{n+1}\}$ be the index set and, without loss of generalisation, assume $x_1\prec x_2 \prec ...\prec x_{n+1}$. Given an arbitrary tensor $\phi$ over $I$, by the Boole-Shannon expansion, we know 
\begin{align*}
\phi = \overline{x}_1 \cdot \phi_{0} + x_1 \cdot \phi_{1},
\end{align*}
where $\phi_c := \phi|_{x_1=c}$ for $c\in \{0,1\}$.
Since $\phi_{c}$ is a tensor over $n$ indices, by induction hypothesis, there is a $\prec'$-ordered normal TDD $\F_c$ such that $\phi_{c} =  \bphi(\F_c)$, where $\prec'$ is the restriction of $\prec$ on $I\setminus\{x_1\}$.  Let $r_{c}$ be the root node and  $w_c := w_{\F_c}$ the weight of $\F_c$. Then we have $\phi_{c} =  \bphi(\F_c)= w_c \cdot \bphi(r_{c})$. Next, we introduce a new root node $v$ with weight 1 on its incoming edge. Set $low(v)$ and $high(v)$ to be $r_0$ and $r_1$ respectively. Furthermore, set the weights on the low- and high-edges of $v$ to be $w_{0}$ and $w_{1}$, respectively. The constructed TDD, denoted by $\F$, is $\prec$-ordered and, after applying the normalisation rule NR2 on $v$, normal. By Eq.~\ref{eq:tdd-intp}, we have $\bphi(\F)=\phi$. 
\end{proof}

\begin{lemma*}[Lemma~\ref{lem: nodes_in_reducedTDD_are_essential}]
Suppose $\F$ is a reduced TDD of a non-constant tensor $\phi$ over index set $I$. Then every non-terminal node of $\F$ is labelled with an index that is essential to $\phi$.  
\end{lemma*}
\begin{proof}
	Suppose $v$ is a non-terminal node of $\F$ which is labelled with a non-essential index $x$. Let $\phi'=\bphi(v)$. Then $\phi'|_{x=0}= \phi'|_{x=1}$. From Eq.~\ref{eq:tddnode},
	$\phi'|_{x=0}= w_0\cdot \bphi(low(v))$ and $\phi'|_{x=1}= w_1\cdot \bphi(high(v))$, where $w_0$ and $w_1$ are the weights on the low- and high-edges of $v$, respectively. It follows by Lemma~\ref{lem:tensor-phase} that $\bphi(low(v))=\bphi(high(v))$ and $w_0=w_1$
	since they are both normal. Note that $low(v))$ and $high(v)$ may be identical.  From  Lemma~\ref{lem:normalTDDedge}, we have $w_0=w_1=1$ and thus $\bphi(v) = \overline{x} \cdot \bphi(low(v)) + x \cdot \bphi(high(v))=\bphi(low(v))$. This shows that we have two nodes, viz. $v$ and $low(v)$, representing the same tensor, which contradicts the assumption that $\F$ is reduced.
\end{proof}

\begin{theorem*}[canonicity, Theorem~\ref{thm:canonicity}]
Let $I$ be an index set and $\prec$ a linear order on $I$. Suppose $\F$ and $\g$ are two  $\prec$-ordered, reduced TDDs over $I$ with $\bphi(\F)=\bphi(\g)$. Then $\F\eqsim\g$. 
\end{theorem*}
\begin{proof}
We prove this by induction on the cardinality of $I$. First,  reduced TDDs of any constant tensor are clearly unique. In particular, from 1) and 2) of Definition~\ref{def:recuded}, the 0 tensor is represented by the reduced TDD with weight 0 which has a unique node, viz. terminal 1.

Suppose the statement holds for any tensor with at most $n$ indices. We prove it also holds for tensors with $n+1$ indices. From $\bphi(\F)=\bphi(\g)$, we have by Lemma~\ref{lem:sameNormalTDDs} that $\bphi(r_\F)=\bphi(r_\g)$ and $w_\F=w_\g$.
In addition, by Lemma~\ref{lem: nodes_in_reducedTDD_are_essential}, $r_\F$ and $r_\g$ are labeled with essential indices. They must be the same as, otherwise, the smaller one in the order $\prec$ is not essential for either $\F$ or $\g$. Let $x$ be this variable. By Lemma~\ref{lem:tdd-shannon}, we have
\begin{align*}
\bphi(\F) &=\overline{x} \cdot \bphi(\F|_{x=0}) + x \cdot \bphi(\F|_{x=1})\\
\bphi(\g) &=\overline{x} \cdot \bphi(\g|_{x=0}) + x \cdot \bphi(\g|_{x=1}).
\end{align*}
Since $\bphi(\F)=\bphi(\g)$, it holds that $\bphi(\F|_{x=c})=\bphi(\g|_{x=c})$ for $c\in \{0,1\}$. By the induction hypothesis, we have $\F_c\eqsim\g_c$. This, together with the fact that $index(r_\F)=index(r_\g)$, implies that $\F\eqsim\g$.
\end{proof}

\begin{theorem*}[Theorem~\ref{thm:reduced_if_no_rule_applicable}]
A normal TDD is reduced if and only if no reduction rule is applicable. 
\end{theorem*}
\begin{proof}
Clearly, a normal TDD $\F$ is not reduced if at least one reduction rule is applicable as, otherwise, we shall have either a node representing tensor 0 or two nodes representing the same tensor. 

On the other hand, suppose no reduction rule is applicable to $\F$. We show by induction on the depth of $\F$ that  $\F$ is reduced. First, from the fact that RR1 and RR2 are not applicable, $\F$ must have a unique terminal node with value 1, and all edges weighted 0 have been redirected to it.

Assume there exist  non-terminal nodes which represent tensor 0 and $v$ is such a node with the $\prec$-largest label. By our assumption and that $label(v)\prec label(low(v))$ and $label(v)\prec label(high(v))$, we have $\bphi(low(v))\not=0$ and $\bphi(high(v))\not =0$. Now, as $\bphi(v)=0$, the weights on the low- and high-edges of $v$ must both be 0, which however activates either RR2 or RR3 and thus a contradiction with our assumption. 

Suppose there are two non-terminal nodes $v$ and $v'$ with $\bphi(v)=\bphi(v')$. Let $\F_v$ and $\F_{v'}$ be the sub-TDDs of $\F$ rooted at $v$ and $v'$ respectively (but set their weights to be 1). Note that no reduction rule is applicable to either $\F_v$ or $F_{v'}$, since otherwise it is also applicable to $\F$.
Then by induction hypothesis, they are both reduced. Furthermore, we have $\bphi(\F_v)=\rmagenta{\bphi(v) =\bphi(v')=}\bphi(\F_{v'})$, and from Theorem~\ref{thm:canonicity}, $\F_v \eqsim \F_{v'}$. Then $index(v) = index(v')$ and $w_0=w_0'$, where $w_0$ and $w_0'$ are the weights on the low-edges of $v$ and $v'$, respectively. Furthermore, it follows from Eq.~\ref{eq:tdd-intp} that $\bphi(low(v)) = \bphi(low(v'))$. By induction hypothesis, we have $low(v) = low(v')$. Similarly, we can prove that $high(v) = high(v')$ as well. That is, RR4 is applicable to $v$ and $v'$ \rmagenta{and thus also a contradiction with our assumption.}

In summary, we have shown that  $\F$ is reduced.
\end{proof}

\end{document}